\newcommand{\todoC}[1]{\todo[author=Clem, inline,color=red!25]{#1}}
\newtheorem{theorem}{Theorem}
\newtheorem{proposition}{Proposition}
\newtheorem{corollary}{Corollary}
\begin{document}

\title{Generic algorithms for scheduling applications on heterogeneous multi-core platforms\protect\footnote{An extended abstract of a preliminary version of this paper has been presented in Euro-Par 2017~\cite{europarPaper}}}

\author{
	Marcos Amaris \and Giorgio Lucarelli \and Cl\'ement Mommessin \and Denis Trystram\footnote{Institute of Mathematics and Statistics, University of S\~ao Paulo, S\~ao Paulo, Brazil. amaris@ime.usp.br\newline Univ. Grenoble Alpes, CNRS, Inria, LIG, F-38000 Grenoble, France. \{giorgio.lucarelli, clement.mommessin, denis.trystram\}@inria.fr}
}

\maketitle

\abstract{
We study the problem of executing an application represented by a precedence task graph on a parallel machine composed of standard computing cores and accelerators.
Contrary to most existing approaches, we distinguish the allocation and the scheduling phases and we mainly focus on the allocation part of the problem: choose the most appropriate type of computing unit for each task.
We address both off-line and on-line settings and design generic scheduling approaches.
In the first case, we establish strong lower bounds on the worst-case performance of a known approach based on Linear Programming for solving the allocation problem.
Then, we refine the scheduling phase and we replace the greedy List Scheduling policy used in this approach by a better ordering of the tasks.
Although this modification leads to the same approximability guarantees, it performs much better in practice.
We also extend this algorithm to more types of computing units, achieving an approximation ratio which depends on the number of different types.
In the on-line case, we assume that the tasks arrive in any, not known in advance, order which respects the precedence relations and the scheduler has to take irrevocable decisions about their allocation and execution.
In this setting, we propose the first on-line scheduling algorithm which takes into account precedences.
Our algorithm is based on adequate rules for selecting the type of processor where to allocate the tasks and it achieves a constant factor approximation guarantee if the ratio of the number of CPUs over the number of GPUs is bounded.
Finally, all the previous algorithms for hybrid architectures have been experimented on a large number of simulations built on actual libraries.
These simulations assess the good practical behavior of the algorithms with respect to the state-of-the-art solutions, whenever these exist, or baseline algorithms.
}

\section{Introduction}\label{sec:Intro}

The parallel and distributed platforms available today become more and more \textit{heterogeneous}.
Such heterogeneous architectures, composed of several kind of computing units, have a growing impact on performance in high-performance computing.
Hardware accelerators, such as General Purpose Graphical Processing Units (in short GPUs)~\cite{Lee}, are often used in conjunction with multiple computing units (CPUs) on the same chip sharing the same common memory.
As an instance of this, the number of platforms of the TOP500 equipped with accelerators has significantly increased during the last years~\cite{top500.org}.
In the future it is expected that the nodes of such platforms will be even more diverse than today: they will be composed of fast computing nodes, hybrid computing nodes mixing general purpose units with accelerators, I/O nodes, nodes specialized in data analytics, etc.
The interconnect of a huge number of such nodes will also lead to more heterogeneity.
Using heterogeneous platforms would lead to better performances through the use of more appropriate resources depending on the computations to perform, but it has a cost in terms of code development and more complex resource management.

In this work, we present efficient algorithms for scheduling an application represented by a precedence task graph on hybrid and heterogeneous computing resources.
We are interested in designing generic approaches for efficiently implementing parallel applications where the scheduling is not explicitly part of the application.
In this way, the code is portable and can be adapted to the next generation of machines.\\

\noindent\textbf{Underlying architecture.}\\
We consider a hybrid multi-core machine composed of two different types of processors. The machine is composed of two sets, each containing identical processors of a same type.
An application consists of tasks that are linked by precedence relations.
Each task is characterized by two processing times depending on which type of processor it is assigned to.
We assume that an exact estimation of both these processing times is available to the scheduler.
This assumption can be justified by several existing models to estimate the execution times of tasks~\cite{Amaris:NCA:2016}.
In several applications we always observe an acceleration of the tasks if they are executed on a GPU compared to their execution on a CPU.
However, we consider the more general case where the relation between the two processing times can differ for different tasks.
This work focuses on the analysis of the qualitative behavior induced by heterogeneity since it may be assumed that the computations dominate local shared memory costs.
Thus, no memory assignment or overhead for data management are considered, nor communication times between the shared memory and the processors, or between two processors of different type.
Without loss of generality, we denote in the following by CPU and GPU the two types of processors.

As the application developers are mainly looking for performance, the objective of a scheduler is usually to minimize the completion time of the last finishing task, which is one of the most commonly studied objectives~\cite{Drozdowski:2009:SPP}.
In an heterogeneous context, minimizing the makespan of an application corresponds to minimizing the maximum between the makespan of the tasks assigned on each set of processor types.\\

\noindent\textbf{Definition and notations.}\\
We consider a parallel application which should be scheduled on $m$ identical CPUs and $k$ identical GPUs.
Without loss of generality, we assume that $m \geq k$.
The application is represented by a Directed Acyclic Graph $G=(V,E)$ whose nodes correspond to sequential tasks and arcs correspond to precedence relations among the tasks.
We denote by $\mathcal{T}$ the set of all tasks.
The execution of a task needs a different amount of time if it is performed by a CPU or by a GPU.
Let $\overline{p_j}$ (resp. $\underline{p_j}$) be the processing time of a task $T_j$ if it is executed on any CPU (resp. GPU).
Given a schedule $S$, we denote by $C_j$ the completion time of a task $T_j$ in $S$.
In any feasible schedule, for each arc $(i,j) \in E$, the task $T_j$ cannot be executed before the completion of $T_i$.
We say that $T_i$ is a \emph{predecessor} of $T_j$ and we denote by $\Gamma^-(T_j)$ the set of all predecessors of $T_j$.
Similarly, we say that $T_j$ is a \emph{successor} of $T_i$ and we denote by $\Gamma^{+}(T_i)$ the set of all successors of $T_i$.
We call \emph{descendant} of $T_j$ each task $T_i$ for which there is a path from $j$ to $i$ in $G$.

The objective is to create a feasible non-preemptive schedule of minimum makespan.
In other words, we seek a schedule that respects the precedence constraints among tasks, does not interrupt their execution and minimizes the completion time of the last task, i.e., $C_{\max}=\max_{T_j}\{C_j\}$.
Extending the three-fields notation for scheduling problems introduced by Graham, this problem can be denoted as $(CPU,GPU) \mid prec \mid C_{\max}$.\\

\noindent\textbf{Contributions and outline.}\\
In this paper we study the above problem on both off-line and on-line settings.
The goal is to design algorithms through a solid theoretical analysis that can be practically implemented in actual systems.

Contrarily to most existing approaches (see for example~\cite{HEFT}), we propose to address the problem 
by separately focusing on the following two phases:
\begin{itemize}
\item \emph{allocation}: each task is assigned to a type of resources, either CPU or GPU;
\item \emph{scheduling}: each task is assigned to a specific pair of resource and time interval respecting the decided allocation as well as the precedence constraints.
\end{itemize}

In off-line mode, we aim to study the two phases separately motivated by the fact that there are strong lower bounds on the approximability of known single-phase algorithms.
For example, the approximation ratio of the well-known Heterogeneous Earliest Finish Time (HEFT) algorithm~\cite{HEFT} cannot be better than $\Omega(\frac{m}{k^2})$ when $k \leq \sqrt{m}$ (Section~\ref{sec:LowerBounds}). On the other hand, it can be easily shown that List Scheduling policies have arbitrarily large approximation ratio, even if we consider some enhanced order of tasks, like prioritizing the task of the largest acceleration.

The two-phases approach has been used by Kedad-Sidhoum et al.~\cite{LP6} where a linear program (which we call Heterogeneous Linear Program or simply HLP) in conjunction with a rounding have been proposed for the allocation phase, while the greedy Earliest Starting Time (EST) policy has been applied to schedule the tasks.
This algorithm, called HLP-EST, achieves an approximation ratio of 6 and we show in Section~\ref{sec:LowerBounds} that this ratio is tight.
In fact, our worst-case example does not depend on the scheduling policy applied in the second phase.

Based on this negative result, we propose to revisit both phases and design an off-line and an on-line algorithm.
In Section~\ref{ssec:offlineSetting}, we replace the EST policy in HLP-EST by a specific ordering of tasks combined to a classical List Scheduling. The task ordering is based on both the allocation decisions taken during the first phase (liner program) and the critical path. This refined algorithm, denoted by HLP-OLS, preserves the tight approximation ratio of 6 and achieves good practical performances.

In Section~\ref{ssec:onlineSetting}, we first present three greedy rules which can be used to decide the allocation.
Although these rules are of low complexity, a desired property in practice, they are only based on the relation between the processing times of a task and neither consider the schedule created up to now nor look to the future precedence relations that define the critical path.
For these reasons, they cannot guarantee any approximation ratio.
However, a more enhanced set of rules that takes into account the actual schedule can lead to an algorithm of competitive ratio $\Theta(\sqrt{\frac{m}{k}})$ in the on-line context where the tasks arrive in any order that respects the precedence constraints, and the scheduler has to take irrevocable decisions for their execution at the time of their arrival.
This is the first on-line algorithm when precedence constraints are considered in the hybrid context.

In Section~\ref{sec:generalization} we propose an extension of HLP-EST and its analysis for the case where $Q\geq2$ types of identical processors are available.
We show that this algorithm has a tight approximation ratio of $Q(Q+1)$.

An experimental evaluation of the discussed off-line and on-line algorithms is presented in Section~\ref{sec:Experiments}. 
In the off-line setting, experiments showed that the new scheduling method based on HLP (HLP-OLS) outperforms HLP-EST on both contexts with 2 or 3 resource types with an average improvement of 10\%. Comparisons with HEFT showed that HLP-OLS offered similar performances with an improvement of 2\% on average for 2 resource types, but a performance decrease of 4\% on average for 3 resource types.
In the on-line setting, results showed that our proposed algorithm outperformed a greedy policy by an average improvement of 16\% but was outperformed by an Earliest Finish Time (EFT) policy by a decrease of 10\% on average.

Before continuing, we present in Section~\ref{section:relatedwork} the work related to our setting and, finally, we conclude in Section~\ref{sec:Conclusions}.

\section{Related work}
\label{section:relatedwork}

Most papers of the huge existing literature about GPUs concern specific applications.
There are only few papers dealing with generic scheduling in mixed CPU/GPU architectures, and very few of them consider precedence constraints.

From a theoretical perspective, the problem of scheduling tasks on two types of resources is more complex than the problem on parallel identical machines, $P \mid prec \mid C_{\max}$, but it is easier than the problem on unrelated machines, $R \mid prec \mid C_{\max}$.
Moreover, if all tasks are accelerated by the same factor in the GPU side, then $(CPU,GPU) \mid prec \mid C_{\max}$ coincides with the problem of scheduling on uniformly-related parallel machines, $Q \mid prec \mid C_{\max}$.
In this sense, we can say that the former is more general than the latter one; however, in our problem all tasks have only two different processing times, that makes it simpler.

For $P \mid prec \mid C_{\max}$, Graham's List Scheduling algorithm~\cite{Graham69} is a 2-approximation, while no algorithm can have a better approximation ratio assuming a particular variant of the Unique Games Conjecture~\cite{Svensson}.
Chudak and Shmoys~\cite{Chudak:1997:AAP} developed a polynomial-time $O(\log m)$-approximation algorithm for $Q \mid prec \mid C_{\max}$ while Chekuri and Bender~\cite{Chekuri:2001:EAA} proposed a faster polynomial-time approximation algorithm with the same order of worst-case performance.
To our best knowledge, no generic approximation algorithm exists for $R \mid prec \mid C_{max}$.
Polylogarithmic algorithms are proposed by Shmoys et al.~\cite{ShmoysChain} for the special case of $R \mid chain \mid C_{max}$ and by Kumar et al.~\cite{KumarForest} for $R \mid forest \mid C_{max}$.

For hybrid architectures, a 6-approximation algorithm was proposed by Kedad-Sidhoum et al.~\cite{LP6}.
In the case of independent tasks there is a $(\frac{4}{3}+\frac{1}{3k})$-approximation algorithm~\cite{DP43}.
If the tasks arrive in an on-line order, a 4-competitive algorithm was presented by Chen et al.~\cite{Guochuan} for hybrid architectures without precedence relations.

A closely related problem, in which the architecture consists of $Q \geq 2$ different types of resources and each task can be executed only on some of them, has also been studied in the literature.
This problem generalizes the dedicated processors case if each processor consists of several identical cores, while a $(Q+1)$-approximation algorithm has been proposed for it~\cite{restricted1+R}.
Note that given an allocation, the problem of scheduling in hybrid machines reduces to the above generalized dedicated processors problem.

On a more practical side, there exist some work about off-line scheduling, such as the well-known algorithm HEFT introduced by Topcuoglu et al.~\cite{HEFT}, which has been implemented on the run-time system starPU~\cite{Augonnet:StarPU}.
Another work studied the systematic comparison of various heuristics~\cite{Braun:2001:CES}.
Specifically, the authors examined 11 different heuristics.
This study provided a good basis for comparison and insights on circumstances why a technique outperforms another.
Finally, Bleuse et al.~\cite{DP43} compared their proposed $(\frac{4}{3}+\frac{1}{3k})$-approximation algorithm with HEFT.
Note that the later two approaches considered only independent tasks.

\section{Preliminaries}\label{sec:LowerBounds}

In this section we briefly present the two basic existing approaches for scheduling on heterogeneous/hybrid platforms and we discuss their theoretical efficiency by presenting lower bounds on their performance.

The first approach is the scheduling-oriented algorithm HEFT~\cite{HEFT}.
According to HEFT, the tasks are initially prioritized with respect to their precedence relations and their average processing times. Then, following this priority, tasks are scheduled with possible backfilling on the available pair of processor and time interval in which they feasibly complete as early as possible.
Note that HEFT is a heuristic that works for platforms with several heterogeneous resources and also takes into account possible communication costs.
However, even for the simpler setting without communication costs, with only two types of resources and $k=1$, HEFT cannot have a worst-case approximation guarantee better than $\frac{m}{2}$~\cite{DP43}.
This result depends only on the number of CPUs, since the example provided uses just one GPU.
The following theorem slightly improves the above result for the case of a single GPU.
More interestingly, it expresses the lower bound to the approximation ratio of HEFT using both the number of CPUs and of GPUs.

\begin{theorem}
\label{thm:lowerheft}
For any $k \leq \sqrt{m}$, the worst-case approximation ratio for HEFT is at least $\frac{m+k}{k^2}\left(1-\frac{1}{e^k}\right)$, even in the hybrid model with independent tasks.
\end{theorem}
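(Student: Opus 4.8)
The goal is to construct, for each pair $(m,k)$ with $k \leq \sqrt{m}$, an instance of independent tasks on which HEFT produces a makespan that is a factor of at least $\frac{m+k}{k^2}(1-\frac{1}{e^k})$ away from the optimum. Since the tasks are independent, the only source of suboptimality is HEFT's allocation and scheduling decisions, driven by its priority rule based on average processing times and its earliest-finish-time placement. The strategy is to engineer processing times $\overline{p_j}$ and $\underline{p_j}$ so that HEFT is lured into placing too many tasks on the GPUs (which are few), creating a long GPU makespan, whereas the optimal schedule balances the load by sending most work to the $m$ CPUs.

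First I would design the task set. I expect to use tasks whose GPU processing time is small relative to their CPU processing time, so that their "average" processing time (which HEFT uses for prioritization) is dominated by the CPU value, while their earliest-finish-time placement tempts HEFT toward the GPU side. The $1 - \frac{1}{e^k}$ factor strongly suggests a geometric or exponentially-decaying family of processing times across $k$ "levels" or groups of tasks: summing a geometric series of ratio $1/e$ (or similar) over $k$ terms yields exactly the $1 - e^{-k}$ shape. So I would partition the tasks into groups indexed so that, when HEFT greedily assigns them one by one to the GPU that currently finishes earliest, the accumulated GPU load telescopes into a sum that evaluates to roughly $k \cdot (1 - e^{-k})$ per GPU, giving total HEFT makespan on the order of $\frac{m+k}{k}(1-e^{-k})$ after normalization.

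\noindent Second, I would compute the two quantities explicitly. For the HEFT schedule, I would track the earliest-finish-time decisions step by step: because HEFT places each task where it finishes earliest and because the GPUs are fast, the greedy rule keeps loading the GPUs until their finish time grows; the careful tuning of the geometric processing times forces the HEFT makespan up to the claimed value. For the optimum, I would exhibit an explicit balanced schedule — most likely assigning the bulk of the tasks to CPUs so that each of the $m$ CPUs carries a small load and the makespan is normalized to (say) $1$, while verifying that this is indeed (near-)optimal by a simple load/volume lower bound. The ratio of the two then yields $\frac{m+k}{k^2}(1-\frac{1}{e^k})$, where the extra factor of $k$ in the denominator comes from the difference between total GPU capacity $k$ and the per-GPU accounting.

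\noindent The main obstacle will be verifying that HEFT's tie-breaking and backfilling behave exactly as intended at every step of the greedy assignment. HEFT is permitted to backfill into idle gaps and chooses the processor-interval of earliest completion, so I must ensure the constructed instance leaves HEFT no better option than the pathological one — in particular that no task is diverted to a CPU earlier than planned and that the priority order induced by average processing times is consistent with the order in which I want the tasks consumed. Getting the geometric parameters and the group sizes to simultaneously satisfy the earliest-finish-time invariant, reproduce the $1-e^{-k}$ sum cleanly, and respect the constraint $k \leq \sqrt{m}$ (which is exactly what keeps the GPU-overload scenario feasible and the ratio meaningful) is the delicate part; the rest is a routine computation of the two makespans and their quotient.
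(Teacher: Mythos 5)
Your plan hinges on the wrong failure mode of HEFT, and this is not just a matter of missing details: the mechanism you describe cannot produce the claimed bound. You want HEFT to be lured into ``placing too many tasks on the GPUs,'' creating a long GPU makespan, while the optimum escapes by spreading that work over the $m$ CPUs. But the earliest-finish-time rule is self-limiting in exactly this direction: a task is placed on a GPU only if no CPU placement finishes earlier, so the GPUs can only be piled up to roughly the cost of a single CPU execution of such a task. Concretely, take tasks with CPU time $c$ and GPU time $g<c$ and equal ranks: EFT loads each GPU only while its load is at most $c-g$, so HEFT's makespan is about $c$; but then any schedule that runs even one of these tasks on a CPU also pays $c$, and the all-GPU alternative is essentially what HEFT already built, so the ratio stays near $1$. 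The paper's instance exploits the opposite error: the few GPUs are wasted (by tie-breaking) on tasks that are \emph{not} accelerated at all (sets $A_i$ with $\overline{p_j}=\underline{p_j}=(\frac{m}{m+k})^i$, $k$ tasks per set), which at each round forces the $m$ highly accelerated tasks of set $B_i$ (CPU time $(\frac{m}{m+k})^i$, GPU time only $\frac{k}{m^2}(\frac{m}{m+k})^m$) onto the CPUs. The geometric decay makes the ranks interleave as $A_1 \succ B_1 \succ A_2 \succ B_2 \succ \cdots$, so this mistake repeats for $m$ rounds and the rounds serialize, giving HEFT makespan $\sum_{i=1}^m (\frac{m}{m+k})^i \simeq \frac{m}{k}(1-e^{-k})$, while the optimum runs the whole of $\bigcup_i B_i$ on the GPUs almost for free and one set $A_i$ per CPU, for a makespan of at most $\frac{km}{m+k}$; the quotient is exactly $\frac{m+k}{k^2}(1-\frac{1}{e^k})$. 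So the long side in HEFT's bad schedule is (also) the CPU side, loaded with work that the optimum accelerates on GPUs --- the reverse of what you propose.

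Beyond the directional error, your quantitative skeleton does not cohere. With the optimum normalized to $1$ you target a HEFT makespan of order $\frac{m+k}{k}(1-e^{-k})$, a factor $k$ above the ratio to be proved, and the remark that the missing $k$ ``comes from the difference between total GPU capacity and per-GPU accounting'' does not repair this. Your guess on the origin of $1-e^{-k}$ is also off: it does not come from a geometric series of ratio about $1/e$ over $k$ levels, but from ratio $\frac{m}{m+k}$ over $m$ levels, via $\sum_{i=1}^m (\frac{m}{m+k})^i = \frac{m}{k}\bigl(1-(\frac{m}{m+k})^m\bigr)$ and $(\frac{m}{m+k})^m \approx e^{-k}$. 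Finally, you correctly identify that verifying the rank order and the EFT decisions step by step is the delicate part, but the proposal never exhibits an instance on which to carry out that verification --- and that construction is the entire content of the theorem.
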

\begin{proof}
We describe an instance that consists of independent tasks, and hence no communication costs are defined.
We also consider the hybrid platform model where we only have a set of $m$ identical CPUs and a set of $k$ identical GPUs.
Then, the rank of each task $T_j \in \mathcal{T}$ computed by HEFT is simplified as follows
\begin{equation*}
rank(T_j) = \frac{m\overline{p_j} + k\underline{p_j}}{m+k}
\end{equation*}
HEFT considers the tasks in non-increasing order with respect to their rank and assigns each task to the CPU or GPU where its completion time is minimized.
In case of ties, we assume, without loss of generality, that HEFT prefers to assign the task to a GPU, while it chooses arbitrarily between CPUs or GPUs.
Notice that, since all tasks are independent, no idle times are introduced in the schedule.

Our instance consists of $2m$ sets of $km+m^2$ tasks in total, as shown in Table~\ref{tab:lowerHEFT}.
\begin{table}
\centering
\begin{tabular}{|c|c|c|c|}
    \toprule
    Sets of tasks & \# tasks per set & $\overline{p_j}$ & $\underline{p_j}$ \\
    \midrule
    $A_i$, $1 \leq i \leq m$ & $k$ & $\left(\frac{m}{m+k}\right)^i$ & $\left(\frac{m}{m+k}\right)^i$ \\
    \midrule
    $B_i$, $1 \leq i \leq m$ & $m$ & $\left(\frac{m}{m+k}\right)^i$ & $\frac{k}{m^2}\left(\frac{m}{m+k}\right)^m$ \\
    \bottomrule
\end{tabular}
\caption{Sets of tasks composing the instance for which HEFT achieves an approximation ratio of $\frac{m+k}{k^2}\left(1-\frac{1}{e^k}\right)$.}\label{tab:lowerHEFT}
\end{table}

The rank of each task $T_j \in A_i$, $1 \leq i \leq m$, is
\begin{equation*}
rank(T_j) = \frac{(m+k)\left(\frac{m}{m+k}\right)^i}{m+k}
\end{equation*}
while the rank of each task $T_j \in B_i$, $1 \leq i \leq m$, is
\begin{equation*}
rank(T_j) = \frac{m \left(\frac{m}{m+k}\right)^i + \frac{k^2}{m^2} \left(\frac{m}{m+k}\right)^m }{m+k}
\end{equation*}

According to the above ranks, HEFT will schedule all tasks in $A_{i+1}$ (resp. $B_{i+1}$) after all tasks in $A_i$ (resp. $B_i$), $1 \leq i \leq m-1$.
Moreover, for any $T_j \in A_i$ and $T_{j'} \in B_i$, $1 \leq i \leq m$, we have
\begin{eqnarray*}
&& (m+k) \left(rank(T_j) - rank(T_{j'})\right) \\
&& = (m+k)\left(\frac{m}{m+k}\right)^i - m \left(\frac{m}{m+k}\right)^i - \frac{k^2}{m^2} \left(\frac{m}{m+k}\right)^m\\
&& = k \left( \left(\frac{m}{m+k}\right)^i - \frac{k}{m^2}\left(\frac{m}{m+k}\right)^m \right)\\
&& \geq k \left( \left(\frac{m}{m+k}\right)^m - \frac{k}{m^2} \left(\frac{m}{m+k}\right)^m \right) > 0
\end{eqnarray*}
where the last inequality holds since $k \leq m$.
For any $T_j \in B_i$ and $T_{j'} \in A_{i+1}$, $1 \leq i \leq m-1$, we have
\begin{eqnarray*}
&& (m+k) \left(rank(T_j) - rank(T_{j'})\right) \\
&& = m \left(\frac{m}{m+k}\right)^i + \frac{k^2}{m^2} \left(\frac{m}{m+k}\right)^m - (m+k)\left(\frac{m}{m+k}\right)^{i+1}\\
&& > \left(\frac{m}{m+k}\right)^i \left(m - (m+k)\frac{m}{m+k}\right) = 0
\end{eqnarray*}
Based on the above, HEFT will consider the sets of tasks according to the following order
\begin{equation*}
A_1 \prec B_1 \prec A_2 \prec B_2 \prec \cdots \prec A_i \prec B_i \prec A_{i+1} \prec \cdots \prec A_m \prec B_m 
\end{equation*}

\begin{figure}
\centering
\begin{tikzpicture}[scale=1, every node/.style={scale=.85}]
    \draw [->] (0,-1) -- (5.5,-1) node[below,font=\small] {time};
    \draw [dashed] (0,-1.05) -- (0,-.5);
    \draw [dashed] (0,2.5) -- (0,3);
    \draw [dashed] (4.9,-1.05) -- (4.9,-.5);
    \draw [dashed] (4.9,2.5) -- (4.9,3);

    \draw (-0.8, 1) node[scale=1.1] {GPUs};
    \draw [<-](-.8, -.45) -- (-.8, .7);
    \draw [->](-.8, 1.3) -- (-.8, 2.45);
    \draw (0,-.5) rectangle (4.9, 2.5);
    \draw (1,-.5) -- (1, 2.5);
    \draw (1.9, -.5) -- (1.9, 2.5);
    \draw (4.5, -.5) -- (4.5, 2.5);
    \draw (0, 1.5) -- (4.9, 1.5);
    \draw (0, .5) -- (4.9, .5);
    \draw[font=\small] (0.5, 0) node{$A_1$};
    \draw[font=\small] (1.45, 0) node{$A_2$};
    \draw[font=\small] (4.7, 0) node{$A_m$};
    \draw[font=\small] (0.5, 2) node{$A_1$};
    \draw[font=\small] (1.45, 2) node{$A_2$};
    \draw[font=\small] (4.7, 2) node{$A_m$};
    \draw (3.2, 0) node[scale=.9]{$\cdots$};
    \draw (3.2, 2) node[scale=.9]{$\cdots$};
    \draw (3.2, 1) node[scale=.9]{$\ddots$};
    \draw (.5, 1) node[scale=.9]{$\vdots$};
    \draw (1.45, 1) node[scale=.9]{$\vdots$};
    \draw (4.7, 1) node[scale=.9]{$\vdots$};

    \draw (-0.8, 4.5) node[scale=1.1]{CPUs};
    \draw [<-](-.8, 3.05) -- (-.8, 4.2);
    \draw [->](-.8, 4.8) -- (-.8, 5.95);
    \draw (0,3) rectangle (4.9,6);
    \draw (1,3) -- (1, 6);
    \draw (1.9, 3) -- (1.9, 6);
    \draw (4.5, 3) -- (4.5, 6);
    \draw (0, 4) rectangle (4.9, 5);
    
    \draw[font=\small] (.5, 3.5) node{$B_1$};
    \draw[font=\small] (1.45, 3.5) node{$B_2$};
    \draw[font=\small] (4.7, 3.5) node{$B_m$};
    \draw (3.2, 3.5) node[scale=.9]{$\cdots$};
    \draw[font=\small] (.5, 5.5) node{$B_1$};
    \draw[font=\small] (1.45, 5.5) node{$B_2$};
    \draw[font=\small] (4.7, 5.5) node{$B_m$};
    \draw (3.2, 5.5) node[scale=.9]{$\cdots$};
    \draw (.5, 4.5) node[scale=.9]{$\vdots$};
    \draw (1.45, 4.5) node[scale=.9]{$\vdots$};
    \draw (3.2, 4.5) node[scale=.9]{$\ddots$};
    \draw (4.7, 4.5) node[scale=.9]{$\vdots$};

    \draw [->] (6.3,-1) -- (8.4,-1) node[below,font=\small] {time};
    \draw [dashed] (6.3,-1.05) -- (6.3,-.5);
    \draw [dashed] (6.3,2.5) -- (6.3,3);
    \draw [dashed] (8.1,-1) -- (8.1,-.5);
    \draw [dashed] (8.1,2.5) -- (8.1,3);

    \draw (6.3,-.5) rectangle (8.1,2.5);
    \draw (6.3, 1.5) -- (8.1, 1.5);
    \draw (6.3, 0.5) -- (8.1, 0.5);
    \draw (6.6,-.5) -- (6.6,2.5);
    \draw (6.9,-.5) -- (6.9,2.5);
    \draw (7.2,-.5) -- (7.2,2.5);
    \draw [black, fill=gray!25](7.5,-.5) rectangle (8.1,2.5);
    \draw[font=\small] (6.9, 1) node{$\bigcup B_i$};

    \draw [black,fill=gray!25](6.3,3) rectangle (8.1,5);
    \draw[font=\small] [black,fill=white](6.3,5) rectangle (8.1,6);
    \draw[font=\small] [fill=white](6.3,4) rectangle (7.7,5);
    \draw [black,fill=white](6.3,3) rectangle (7.5,4);
    \draw (7, 4.5) node[scale=.9]{$\vdots$};
    \draw (6.9, 5) -- (6.9, 6);
    \draw (7.5, 5) -- (7.5, 6);
    \draw (6.7, 3) -- (6.7, 4);
    \draw (7.1, 3) -- (7.1, 4);
    \draw[font=\small] (6.6, 5.5) node{$A_1$};
    \draw[font=\small] (7.2, 5.5) node{$A_1$};
    \draw[font=\small] (7.8, 5.5) node{$A_1$};
    \draw[font=\small] (6.5, 3.5) node{$A_m$};
    \draw[font=\small] (6.9, 3.5) node{$A_m$};
    \draw[font=\small] (7.3, 3.5) node{$A_m$};

\end{tikzpicture}
\caption{Possible schedule of HEFT (left) and optimal schedule (right). Notice that the gray area represents idle time.}
\label{fig:heft}
\end{figure}
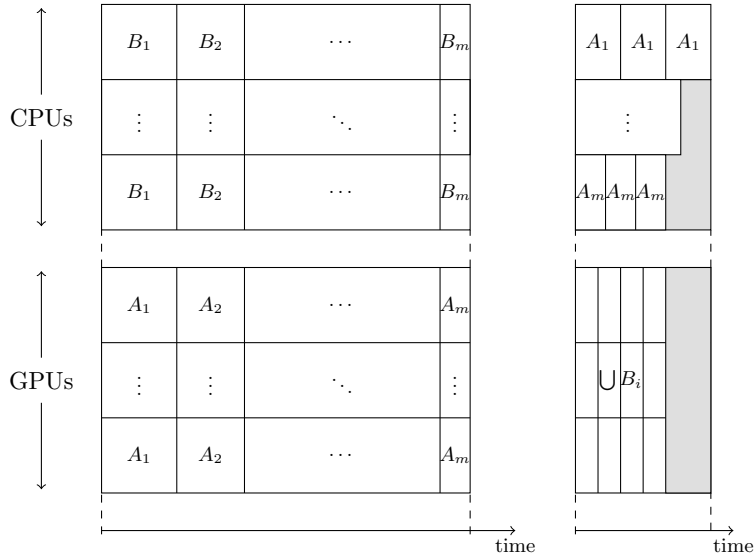

Initially, HEFT will schedule the $k$ tasks in $A_1$ in a different GPU.
Hence, to minimize the completion times of the $m$ tasks in $B_1$, each one should be scheduled on a different CPU.
Note that, all tasks in $A_1 \cup B_1$ finish at the same time, i.e., at time $\frac{m}{m+k}$.
Similarly, the tasks in $A_2$ will be scheduled on a different GPU, the tasks in $B_2$ on a different CPU, and all of them will finish at the same time, i.e., at time $\frac{m}{m+k}+\left(\frac{m}{m+k}\right)^2$.
The scheduling procedure continues in the same way for the tasks in the remaining sets.
The left-hand side of Figure~\ref{fig:heft} shows a schedule produced by HEFT.
In this schedule, all machines finish their execution at time
\begin{eqnarray*}
&& \sum_{i=1}^m\left(\frac{m}{m+k}\right)^i
= \frac{1-\left(\frac{m}{m+k}\right)^{m+1}}{1-\frac{m}{m+k}} - 1
\simeq \frac{1-\frac{m}{m+k} \frac{1}{e^k}}{1-\frac{m}{m+k}} - 1\\
&& = \frac{(m+k)e^k-m}{k e^k} - 1
= \frac{m e^k - m}{k e^k} 
= \frac{m}{k} \left(1-\frac{1}{e^k}\right)
\end{eqnarray*}

On the other hand, we can create a schedule of makespan at most $\frac{km}{m+k}$.
To see this, we assign all tasks of $A_i$, $1 \leq i \leq m$, on CPU $i$, while to each of the $k$ GPUs we assign $\frac{m^2}{k}$ different tasks of $\bigcup_{i=1}^m B_i$.
The right-hand side of Figure~\ref{fig:heft} visualizes such a schedule, whose makespan is dominated either by the load of CPU 1 or by the load of any of the GPUs.
Specifically, the makespan will be equal to
\begin{equation*}
\max \left\{ k \left(\frac{m}{m+k}\right), \frac{m^2}{k}\frac{k}{m^2}\left(\frac{m}{m+k}\right)^m \right\}
\leq \frac{km}{m+k}
\end{equation*}
Since an optimal schedule could have an even smaller makespan the theorem follows.
\end{proof}


The second approach is proposed by Kedad-Sidhoum et al.~\cite{LP6} and distinguishes the allocation and the scheduling decisions.
For the allocation phase, an integer linear program is proposed which decides the allocation of tasks to the CPU or GPU side by optimizing the standard lower bounds for the makespan of a schedule which are proposed by Graham~\cite{Graham69}, namely the critical path and the load.
To present this integer linear program, let $x_j$ be a binary variable which is equal to 1 if a task $T_j$ is assigned to the CPU side, and zero otherwise.
Let also $C_j$ be a variable that indicates the completion time of $T_j$ and $\lambda$ the variable that corresponds to the maximum over all lower bounds used., i.e., to a lower bound of the makespan.
Then, the Heterogeneous Linear Program (HLP) is as follows:
\begin{align}
& \text{minimize } \lambda \notag\\
& C_i + \overline{p_j} x_j + \underline{p_j} (1-x_j) \leq C_j && \forall T_j \in \mathcal{T}, T_i \in \Gamma^{-}(T_j) \label{ci1}\\
& \overline{p_j} x_j + \underline{p_j} (1-x_j) \leq C_j && \forall T_j \in \mathcal{T}: \Gamma^{-}(T_j)=\emptyset \label{ci6}\\
& C_j \leq \lambda && \forall T_j \in \mathcal{T} \label{ci2}\\
& \frac{1}{m} \sum_{T_j \in \mathcal{T}} \overline{p_j} x_j \leq \lambda \label{ci3}\\
& \frac{1}{k} \sum_{T_j \in \mathcal{T}} \underline{p_j} (1-x_j) \leq \lambda \label{ci4}\\
& x_j \in \{0,1\} && \forall T_j \in \mathcal{T} \label{ci5}\\
& C_j \geq 0 && \forall T_j \in \mathcal{T} \notag
\end{align}
Constraints~(\ref{ci1}),~(\ref{ci6}) and~(\ref{ci2}) describe the critical path, while Constraints~(\ref{ci3}) and (\ref{ci4}) impose that the makespan cannot be smaller than the average load on CPU and GPU sides.
Note that the particular problem of deciding the allocation to minimize the maximum over the three lower bounds is NP-hard, since it is a generalization of the PARTITION problem to which reduces if all tasks are independent, $m=k$, and $\overline{p_j}=\underline{p_j}$ for each $T_j$.

After relaxing the integrity Constraint~(\ref{ci5}), a fractional allocation can be found in polynomial time.
To get an integral solution, the variables $x_j$ are rounded as follows: if $x_j \geq \frac{1}{2}$ then $T_j$ is assigned to the CPU side, otherwise $T_j$ is assigned to the GPU side.

Finally, the Earliest Starting Time (EST) policy is applied for scheduling the tasks: at each step, the ready task with the earliest possible starting time is scheduled respecting the precedence relations and the decided allocation.
We call this algorithm HLP-EST.\\

HLP-EST achieves an approximation ratio of~6~\cite{LP6}.
The following theorem shows that this ratio is tight.

\begin{theorem}
\label{thm:lower6}
There is an instance for which HLP-EST achieves an approximation ratio of $6-O(\frac{1}{m})$.
Hence, the ratio for HLP-EST is tight.
\end{theorem}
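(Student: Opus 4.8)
The plan is to reverse-engineer the $6$-approximation analysis of~\cite{LP6} and build an instance that makes every inequality in that analysis essentially tight at the same time. Recall that the analysis bounds the makespan of HLP-EST by $C_{\max} \le L + \frac{1}{m}\sum_{j:\,x_j=1}\overline{p_j} + \frac{1}{k}\sum_{j:\,x_j=0}\underline{p_j}$, where $L$ is the length of the critical path under the \emph{rounded} allocation and the two sums are the CPU- and GPU-side loads under the rounded allocation. The rounding rule ($x_j \ge \tfrac12 \Rightarrow$ CPU) together with Constraints~(\ref{ci1})--(\ref{ci4}) guarantees that each of these three terms is at most $2\lambda \le 2\cdot OPT$, which yields the factor $6$. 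To show the bound is tight I need an instance in which (i) the LP optimum $\lambda$ is essentially equal to the true optimum $OPT$, (ii) the rounding commits a large collection of tasks with $x_j \approx \tfrac12$ entirely to one processor type, so that each of the three terms becomes twice the amount the LP was charged for it, and (iii) the three doubled terms are forced to accumulate sequentially in the produced schedule rather than overlap.

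First I would design the processing times so that the optimal fractional solution of HLP sets $x_j$ equal to (or infinitesimally close to) $\tfrac12$ on all the relevant tasks: the instance must be balanced so that moving any such task fully to one side strictly worsens either a load constraint~(\ref{ci3})/(\ref{ci4}) or a critical-path constraint~(\ref{ci1})/(\ref{ci2}), pinning the LP optimum at the split point. The rounding rule then sends every such task to a single side, so that the realized load and critical-path lengths are exactly twice what the LP charged for them.

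Second, to force sequential accumulation I would organize the tasks into three groups chained together by precedence arcs: a group whose rounded placement produces a critical path of length $\approx 2\lambda$, a group of tasks that after rounding overload the $m$ CPUs with total work $\approx 2m\lambda$, and a group that after rounding overload the $k$ GPUs with work $\approx 2k\lambda$. Because the three groups are linked in series, any feasible schedule --- and in particular HLP-EST, independently of the tie-breaking or list order used in the second phase, which is exactly the scheduling-policy-independence noted above --- must run them one after another, giving makespan $\approx 6\lambda$. I would then exhibit an explicit schedule of makespan $\approx \lambda$ (using the balanced, ``fractional'' placement rather than the rounded one) to certify $OPT \le (1+o(1))\lambda$, so that the ratio tends to $6$; the $O(\tfrac1m)$ loss reflects the finite-$m$ boundary effects in balancing the loads and rounding the split tasks.

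The main obstacle is item~(i) combined with~(iii): the three sources of a factor $2$ pull the design in opposite directions, since fattening the critical path tends to relax the load constraints and vice versa, so I must choose the processing times and the sizes of the three groups carefully enough that a \emph{single} value of $\lambda$ makes all three LP bounds tight simultaneously while a genuinely good (makespan $\approx \lambda$) schedule still exists. Verifying that the LP optimum really equals this $\lambda$ --- that no cleverer fractional allocation beats it --- and that the good schedule is feasible under the precedence arcs is where the bulk of the routine but delicate computation will lie.
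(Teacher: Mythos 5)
Your overall strategy is the right one and matches the paper's: reverse-engineer the $CP + W_{CPU}/m + W_{GPU}/k \leq 6\,C_{\max}^*$ analysis, build an instance where the LP value coincides with the optimal makespan, make the threshold rounding double all three terms via tasks with $x_j \approx \frac{1}{2}$, and use precedences to force the doubled terms to accumulate sequentially (which, as you note, makes the bound independent of the second-phase policy, exactly as in Corollary~\ref{cor:hlp}). However, what you have written is a plan whose crux is deferred to ``routine but delicate computation,'' and the crux as you have set it up cannot be carried out. A task doubles under rounding only if its cheap-side time is negligible: the rounded length $\overline{p_j}$ equals twice the fractional length $\frac{1}{2}(\overline{p_j}+\underline{p_j})$ only when $\underline{p_j}/\overline{p_j} \to 0$. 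But then your pinning mechanism --- ``moving any such task fully to one side strictly worsens a load or critical-path constraint'' --- is unachievable, because the LP can move mass in \emph{pairs}: shifting a CPU-overloading task (large $\overline{p}$, negligible $\underline{p}$) toward GPU and simultaneously a GPU-overloading task (negligible $\overline{p}$, large $\underline{p}$) toward CPU strictly decreases both loads in Constraints~(\ref{ci3})--(\ref{ci4}) and shortens the fractional critical path. Hence no fractional solution exhibiting your doubling structure is ever LP-optimal on the strength of the load and critical-path constraints alone; without an external device the LP value (and with it the true optimum and the algorithm's output) collapses, and there is no split point to pin.

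The paper's construction supplies exactly the two ingredients your plan is missing. First, an external pinning device: a single task $A$ with $\overline{p_A}=\frac{m(2m+1)}{m-1}$ and $\underline{p_A}=\infty$, which forces $x_A=1$ and hence $\lambda \geq \overline{p_A}$ in \emph{every} feasible fractional solution, while an explicit (integral) schedule achieves makespan $\overline{p_A}$, so $C_{\max}^* = \overline{p_A}$. Second, with $\lambda$ pinned at this large value there is enough slack that the adversarial assignment ($m=k$; a set $B_1$ of $2m+1$ tasks with $(\overline{p},\underline{p})=(2m-1,1)$ at $x_j=\frac{1}{2}$, a set $B_2$ of $2m+1$ tasks with $(\overline{p},\underline{p})=(1,2m-1)$ at $x_j=\frac{1}{2}-\epsilon$, and complete bipartite precedence $B_1 \prec B_2$) is \emph{one} optimal LP solution among many --- the tightness argument rests on the convention that the algorithm may be handed any optimum, not on uniqueness. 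Rounding then sends every $B_i$ task to its expensive side; since $2m+1$ tasks of length $2m-1$ on $m$ machines need three rounds and $B_2$ cannot start before $B_1$ completes, every schedule of the rounded allocation has makespan $6(2m-1)$, giving the ratio $6(2m-1)\big/\frac{m(2m+1)}{m-1} = 6-O(\frac{1}{m})$. Note also that the paper needs no separate ``critical-path group'': the straggler third round of each set is what supplies the $CP$ term, so the same tasks do double duty for load and critical path. Finally, your certificate of $C_{\max}^* \approx \lambda$ cannot literally ``use the balanced fractional placement,'' since schedules are integral; the good schedule is the \emph{anti}-rounding, placing each $B_i$ task on its cheap side, which is feasible here precisely because the precedence-constrained sets are cheap in opposite directions.
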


\begin{proof}
Consider a hybrid system with an equal number of CPUs and GPUs, i.e, $m=k$.
The instance consists of $2m+3$ tasks that are partitioned into $3$ sets as shown in Table~\ref{tab:lowerHLP6}.

\begin{table}
\centering
\begin{tabular}{|c|c|c|c|}
    \toprule
    Sets of tasks & \# tasks per set & $\overline{p_j}$ & $\underline{p_j}$ \\
    \midrule
    $A$ & $1$ & $\frac{m(2m+1)}{m-1}$ & $\infty$ \\
    \midrule
    $B_1$ & $2m+1$ & $2m-1$ & $1$ \\
    \midrule
    $B_2$ & $2m+1$ & $1$ & $2m-1$ \\
    \bottomrule
\end{tabular}
\caption{Sets of tasks composing the instance for which HLP-EST achieves an approximation ratio of $6-O(\frac{1}{m})$.}\label{tab:lowerHLP6}
\end{table}

The only precedence relations exist between tasks of $B_1$ and $B_2$.
Specifically, for each task $T_j \in B_2$ we have that $\Gamma^-(T_j)=B_1$, that is no task in $B_2$ can be executed before the completion of all tasks in $B_1$.
Note that there are no precedences between tasks of the same set.

Any optimal solution of the relaxed HLP for the above instance will assign the task $T_A$ on a CPU, i.e., $x_A=1$.
Hence, the objective value of any optimal solution will be at least $\frac{m(2m+1)}{m-1}$ due to Constraints~(\ref{ci6}) and~(\ref{ci2}).
The following technical proposition shows that an optimal solution for the relaxed HLP has exactly this objective value, by describing a feasible fractional assignment for the remaining tasks.

\begin{proposition}
\label{prop:lowerlp}
There is a small constant $\epsilon>0$ for which the assignment $x_A=1$, $x_j=\frac{1}{2}$ for each $T_j \in B_1$, $x_j=\frac{1}{2}-\epsilon$ for each $T_j \in B_i$, and $\lambda=\frac{m(2m+1)}{m-1}$ corresponds to a feasible solution for the relaxed HLP.
\end{proposition}

\begin{proof}
We will show that every constraint of the relaxed HLP is satisfied by the assignment of the binary variables $x_j$ proposed in the statement and by setting $\lambda = \frac{m(2m+1)}{m-1}$.
But before this, we need to feasibly define $C_j$, for each $T_j \in \mathcal{T}$, based on Constraints~(\ref{ci1}) and~(\ref{ci6}).\\

For the task $T_A$, we set 
\begin{equation*}
C_A=\frac{m(2m+1)}{m-1}
\end{equation*}
for each task $T_j \in B_1$, we set
\begin{equation*}
C_j = \frac{1}{2}(2m-1)+\frac{1}{2} = m
\end{equation*}
while for each task $T_j \in B_2$, we set
\begin{equation*}
C_j = m + (\frac{1}{2}-\epsilon)+(\frac{1}{2}+\epsilon)(2m-1) = 2m + 2\epsilon(m-1)
\end{equation*}
satisfying by definition Constraints~(\ref{ci1}) and~(\ref{ci6}).\\

To show the feasibility of Constraint~(\ref{ci2}), it suffices to prove it for $T_A$ as well as for a task $T_j \in B_2$.
For these cases, we have
\begin{align*}
 C_A &= \frac{m(2m+1)}{m-1} = \lambda \\
 C_j &= 2m + 2\epsilon(m-1) \leq \lambda
\end{align*}
where the last inequality holds for arbitrarily small $\epsilon$, and hence Constraint~(\ref{ci2}) is satisfied.\\

For Constraint~(\ref{ci3}), we have
\begin{align*}
\sum_{T_j \in \mathcal{T}} \overline{p_j}x_j 
 & = \overline{p_A}x_A + \sum_{T_j \in B_1 \cup B_2} \overline{p_j}x_j \\
 & = \frac{m(2m+1)}{m-1} + (2m+1) \frac{2m-1}{2} + (2m+1) (\frac{1}{2}-\epsilon) \\
 & < \frac{m(2m+1)}{m-1} + m(2m+1) \\
 & = m \frac{m(2m+1)}{m-1} = m \lambda
\end{align*}
and hence it is satisfied.\\

For Constraint~(\ref{ci4}), we have
\begin{align*}
\sum_{T_j \in \mathcal{T}} \underline{p_j}(1-x_j)
 & = \underline{p_A}(1-x_A) + \sum_{T_j \in B_1 \cup B_2} \underline{p_j}(1-x_j) \\
 & = 0 + (2m+1) \frac{1}{2} + (2m+1)(2m-1) (\frac{1}{2}+\epsilon) \\
 & < m(2m+1) + \epsilon(4m^2-1) \\
 & \leq m \lambda = k \lambda
\end{align*}
where the last inequality is true for an arbitrarily small $\epsilon$, and hence the constraint is satisfied.\\

Concluding, all constraints are satisfied with $\lambda=\frac{m(2m+1)}{m-1}$, and thus the proposition holds.
\end{proof}

Given the optimal fractional assignment proposed above, HLP-EST will round the fractional variables and allocate the tasks as follows: the task $T_A$ is assigned to the CPU side, each task $T_j \in B_1$ is assigned to the CPU side, and each task $T_j \in B_2$ is assigned to the GPU side.
Then, HLP-EST schedules the tasks according to the EST policy.
However, we will argue here for any scheduling policy and thus any possible schedule.

Assuming that an algorithm has scheduled the task $T_A$ on any CPU during any interval $[t,t+\overline{p_A})$ and $m \geq 3$, there is only one meaningful family of schedules for the tasks in $B_1 \cup B_2$.
Specifically, the $2m+1$ tasks of $B_1$ will be scheduled during the interval $[0,3(2m-1))$ on the $m$ CPUs, while at least one of them completes at time $3(2m-1)$.
Then, the $2m+1$ tasks of $B_2$ will be scheduled during the interval $[3(2m-1),6(2m-1))$ on the $k=m$ GPUs, while at least one of them completes at time $6(2m-1)$.
Clearly, we should define $t$ such that $t+\overline{p_A} \leq 6(2m-1)$.
An illustration of the above schedule is given in Figure~\ref{fig:makespanJi}.

\begin{figure}
\centering
\begin{tikzpicture}[scale=1.2, every node/.style={scale=1.1}]
    \draw [font=\small] (-0.5, 2.25) node {GPUs};
    \draw [black,fill=gray!25] (0,1) rectangle (7,3.5);
    \draw (0,1.5) -- (7,1.5);
    \draw (0,2.5) -- (7,2.5);
    \draw (0,3) -- (7,3);
    \draw [font=\small,fill=white] (3*7/6,1) rectangle node {$B_2$} (4*7/6,1.5);
    \draw [font=\small,fill=white] (4*7/6,1) rectangle node {$B_2$} (5*7/6,1.5);
    \draw [font=\small,fill=white] (5*7/6,1) rectangle node {$B_2$} (6*7/6,1.5);
    \draw [font=\small,fill=white] (3*7/6,1.5) rectangle node {$\dots$} (5*7/6,2.5);
    \draw [font=\small,fill=white] (3*7/6,2.5) rectangle node {$B_2$} (4*7/6,3);
    \draw [font=\small,fill=white] (4*7/6,2.5) rectangle node {$B_2$} (5*7/6,3);
    \draw [font=\small,fill=white] (3*7/6,3) rectangle node {$B_2$} (4*7/6,3.5);
    \draw [font=\small,fill=white] (4*7/6,3) rectangle node {$B_2$} (5*7/6,3.5);

    \draw [font=\small] (-0.5, 5.25) node {CPUs};
    \draw [black,fill=gray!25] (0,4) rectangle (7,6.5);
    \draw (0,4.5) -- (7,4.5);
    \draw (0,5.5) -- (7,5.5);
    \draw (0,6) -- (7,6);
    \draw [font=\small,fill=white] (0*7/6,4) rectangle node {$B_1$} (1*7/6,4.5);
    \draw [font=\small,fill=white] (1*7/6,4) rectangle node {$B_1$} (2*7/6,4.5);
    \draw [font=\small,fill=white] (2*7/6,4) rectangle node {$B_1$} (3*7/6,4.5);
    \draw [font=\small,fill=white] (0*7/6,4.5) rectangle node {$\dots$} (2*7/6,5.5);
    \draw [font=\small,fill=white] (0*7/6,5.5) rectangle node {$B_1$} (1*7/6,6);
    \draw [font=\small,fill=white] (1*7/6,5.5) rectangle node {$A$} (2*7/6+0.1,6);
    \draw [font=\small,fill=white] (0*7/6,6) rectangle node {$B_1$} (1*7/6,6.5);
    \draw [font=\small,fill=white] (1*7/6,6) rectangle node {$B_1$} (2*7/6,6.5);
    \draw [font=\small,fill=white] (2*7/6,6) rectangle node {$B_1$} (3*7/6,6.5);

    \draw [thick,->](0,0.5) -- (7.5,0.5);
    \draw [dotted] (0*7/6,0.4) -- (0*7/6,6.5);
    \draw [dotted] (1*7/6,0.0) -- (1*7/6,6.5);
    \draw [dotted] (2*7/6,0.4) -- (2*7/6,6.5);
    \draw [dotted] (3*7/6,0.0) -- (3*7/6,6.5);
    \draw [dotted] (4*7/6,0.4) -- (4*7/6,6.5);
    \draw [dotted] (5*7/6,0.0) -- (5*7/6,6.5);
    \draw [dotted] (6*7/6,0.4) -- (6*7/6,6.5);
    \node [font=\small] at (0*7/6,0.2) {$0$};
    \node [font=\small] at (1*7/6,-0.2) {$(2m-1)$};
    \node [font=\small] at (2*7/6,0.2) {$2(2m-1)$};
    \node [font=\small] at (3*7/6,-0.2) {$3(2m-1)$};
    \node [font=\small] at (4*7/6,0.2) {$4(2m-1)$};
    \node [font=\small] at (5*7/6,-0.2) {$5(2m-1)$};
    \node [font=\small] at (6*7/6,0.2) {$6(2m-1)$};
\end{tikzpicture}
\caption{Resulting schedule of HLP-EST for the proposed instance. Notice that the gray areas represent idle times.}
\label{fig:makespanJi}
\end{figure}
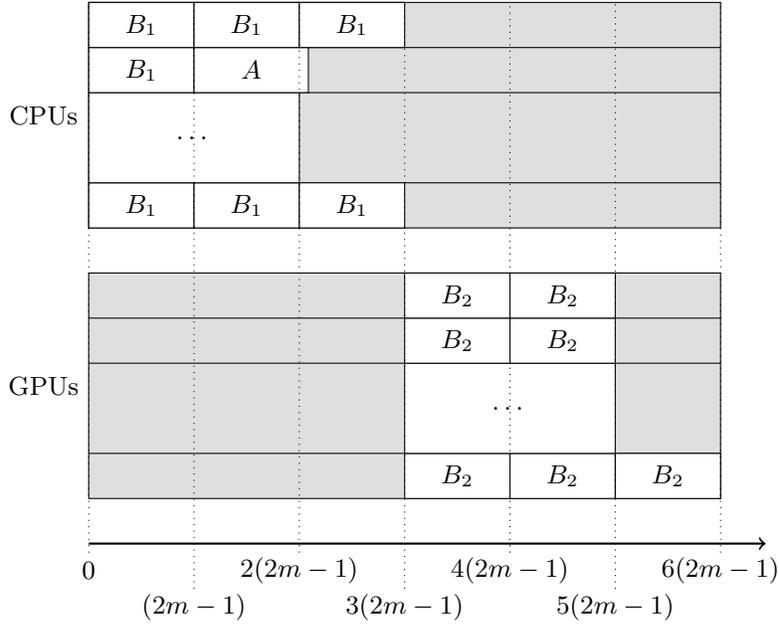

The makespan of the created schedule is equal to $6(2m-1)$, while Proposition~\ref{prop:lowerlp} implies a feasible solution for the relaxed HLP of objective value $\frac{m(2m+1)}{m-1}$.
Hence, the approximation ratio achieved for this instance is
\begin{equation*}
\frac{6(2m-1)}{\frac{m(2m+1)}{m-1}} = 6 - O\left(\frac{1}{m}\right)
\end{equation*}
and the theorem follows.
\end{proof}

Note that the proof of the previous theorem implies a stronger result since the worst case example does not depend on which scheduling policy will be applied after the allocation step, and hence the following corollary holds.

\begin{corollary} \label{cor:hlp}
Any scheduling policy which is applied after the allocation decisions taken by the rounding of an optimal solution of the relaxed HLP leads to an approximation algorithm of ratio at least $6-O(\frac{1}{m})$.
\end{corollary}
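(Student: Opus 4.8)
The plan is to observe that the corollary is an immediate consequence of the proof of Theorem~\ref{thm:lower6}, so I would reuse the same instance (Table~\ref{tab:lowerHLP6}) together with the optimal fractional solution of the relaxed HLP given in Proposition~\ref{prop:lowerlp}. The first step is to recall that the allocation produced by the rounding step is entirely determined by this fractional solution and by the rounding rule ($x_j \geq \frac{1}{2}$ maps to the CPU side, otherwise to the GPU side), and is therefore fixed \emph{before} any scheduling decision is made: the task $T_A$ and every task of $B_1$ are assigned to the CPU side, while every task of $B_2$ is assigned to the GPU side. I would stress that no scheduling policy has been invoked up to this point, so the allocation is the common starting point for every policy considered in the corollary.

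The key step is to lower bound the makespan of an \emph{arbitrary} feasible schedule that respects this allocation and the precedence constraints, without assuming anything about the policy that produced it. Here I would isolate the purely combinatorial argument already implicit in the proof of Theorem~\ref{thm:lower6}: since $B_1$ consists of $2m+1$ tasks all placed on the $m$ CPUs, the pigeonhole principle forces some CPU to execute at least three of them, each of length $\overline{p_j}=2m-1$, so the last task of $B_1$ completes no earlier than $3(2m-1)$. Because every task of $B_2$ has the whole of $B_1$ as its predecessor set, no task of $B_2$ can start before time $3(2m-1)$; applying the same pigeonhole argument to the $2m+1$ tasks of $B_2$ on the $k=m$ GPUs (each of length $\underline{p_j}=2m-1$) adds a further $3(2m-1)$, giving a makespan of at least $6(2m-1)$ for every such schedule. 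I would also remark that the presence of $T_A$ on a CPU can only delay, never advance, the completion of $B_1$, so it cannot weaken this bound.

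Finally, I would combine this policy-independent lower bound with the value of an optimal schedule. Since $T_A$ must run on a CPU and has $\overline{p_A}=\frac{m(2m+1)}{m-1}$, the optimal makespan is at least $\frac{m(2m+1)}{m-1}$; and placing $B_1$ on the GPUs and $B_2$ on the CPUs (with $T_A$ on its own CPU) attains exactly this value for $m\geq 3$, in agreement with the feasible fractional HLP solution of Proposition~\ref{prop:lowerlp}. Dividing the two quantities reproduces the ratio $\frac{6(2m-1)}{\frac{m(2m+1)}{m-1}} = 6-O(\frac{1}{m})$ computed in the theorem, now valid for every scheduling policy rather than only for EST, which yields the corollary.

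The step I expect to require the most care is the second one: making fully rigorous that the $6(2m-1)$ bound holds for \emph{every} schedule compatible with the allocation and the precedences, in particular that idle time cannot be exploited to overlap $B_1$ and $B_2$ (ruled out by the precedence $\Gamma^-(T_j)=B_1$ for each $T_j\in B_2$, which separates the two phases in time even though they use different resource types) and that rescheduling $T_A$ into any interval cannot bring the $B_1$ completion below $3(2m-1)$.
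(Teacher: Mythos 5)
Your proposal is correct and follows essentially the same route as the paper: the paper derives the corollary directly from the proof of Theorem~\ref{thm:lower6}, which already argues for an arbitrary scheduling policy on the same instance, the same fractional solution of Proposition~\ref{prop:lowerlp}, and the same rounding. Your write-up merely makes explicit two points the paper leaves implicit --- the pigeonhole argument forcing some CPU (resp.\ GPU) to run three tasks of $B_1$ (resp.\ $B_2$), and the construction of a feasible schedule of makespan exactly $\frac{m(2m+1)}{m-1}$ so that the ratio is taken against the true optimum rather than only the LP value --- which strengthens rather than changes the argument.
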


\section{Algorithms}\label{sec:Algorithms}

In this section we propose algorithms for both the off-line and on-line settings of the addressed problem.

\subsection{Off-line setting}\label{ssec:offlineSetting}

We propose in the following a new scheduling policy which prioritizes the tasks based on the solution obtained for HLP after the rounding step.

The motivation of assigning priorities to the tasks is for taking into account the precedence relations between them.
More specifically, we want to prioritize the scheduling of \emph{critical tasks}, i.e., the tasks on the critical path, before the remaining (less critical) tasks.

To do this, we define for each task $T_j$ a rank $Rank(T_j)$ in the same sense as in HEFT.
However, in our case, the rank of each task depends on the allocation given by HLP, while in HEFT it depends on the average processing time of the task.
Specifically, the rank of each task $T_j$ is computed after the rounding operation of the assignment variable $x_j$ and corresponds to the length, in the sense of processing time, of the longest path between this task and its last descendant in the precedence graph.
Thus, each task will have a larger rank than all its descendants.
The rank of the task $T_j$ is recursively defined as follows:
\[ Rank(T_j) = \overline{p_j}x_j + \underline{p_j}(1-x_j) + \max_{i \in \Gamma^+(T_j)}\{Rank(T_i)\} \]
After ordering the tasks in non-increasing order with respect to their ranks, we apply the standard List Scheduling algorithm adapted to two types of resources and taking into account the rounding of the assignment variables $x_j$.
We call the above described policy Ordered List Scheduling (OLS), while the newly defined algorithm (including the allocation) is denoted by HLP-OLS.\\

Although this policy performs well in practice, as we will see in Section~\ref{sec:Experiments}, its approximation ratio cannot be better than 6 due to the lower bound presented in Theorem~\ref{thm:lower6}.
On the other hand, it is quite easy to see that HLP-EST and HLP-OLS have the same approximation ratio by following the same reasoning as in Lemmas 4 and 5 of Kedad-Sidhoum et al.~\cite{LP6}.

Consider a schedule produced by HLP-OLS and partition the time interval $I = [0,C_{max})$ into two subsets $I_{CP}$ and $I_{W}$. The set $I_{CP}$ contains every time slot where at least one processor of each type is idle, while the set $I_W$ consists of the remaining time slots in I, i.e., $I_W = I \backslash I_{CP}$. We then can divide the set $I_W$ into two possibly non-disjoint subsets $I_{CPU}$ (resp. $I_{GPU}$) containing the time slots where all the CPUs (resp. GPUs) are busy.

Denoting by $|I|$ the number of unitary time slots in an interval $I$ we have
\begin{align*}
    |I_{CP}| &\leq CP \\
    |I_{CPU}| &\leq \frac{W_{CPU}}{m} \\
    |I_{GPU}| &\leq \frac{W_{GPU}}{k}
\end{align*}
where $CP$, $W_{CPU}$ and $W_{GPU}$ denote respectively the length of the critical path, the total work load on all CPUs and the total work load on all GPUs of the considered schedule. 
With $C_{max}^*$ being the makespan of the optimal schedule, each of the 3 above inequalities are bounded by $2C_{max}^*$.
Thus, we can deduce the following bound for the makespan of HLP-OLS
\begin{align*}
    C_{max} &\leq |I_{CPU}| + |I_{GPU}| + |I_{CP}| \\
    &\leq \frac{W_{CPU}}{m} + \frac{W_{GPU}}{k} + CP \\
    &\leq 6C_{max}^*
\end{align*}

\begin{corollary}
HLP-OLS achieves an approximation ratio of 6. This ratio is tight.
\end{corollary}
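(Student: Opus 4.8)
The plan is to separate the statement into its two parts — the upper bound of $6$ and its tightness — and dispatch each using material already in place. For the upper bound I would formalize the interval decomposition sketched just above: starting from an HLP-OLS schedule of makespan $C_{max}$, partition $[0,C_{max})$ into the set $I_{CP}$ of slots in which at least one CPU and at least one GPU are idle, its complement $I_W = I \setminus I_{CP}$, and the cover $I_W \subseteq I_{CPU} \cup I_{GPU}$. Establishing the three bounds $|I_{CP}| \le CP$, $|I_{CPU}| \le W_{CPU}/m$ and $|I_{GPU}| \le W_{GPU}/k$ then immediately yields $C_{max} \le |I_{CP}| + |I_{CPU}| + |I_{GPU}| \le CP + \frac{W_{CPU}}{m} + \frac{W_{GPU}}{k}$, exactly the chain of inequalities displayed before the corollary. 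The load bounds are routine (during a slot counted in $I_{CPU}$ all $m$ CPUs run, contributing $m$ units of work, and symmetrically for GPUs).

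The step I expect to be the main obstacle is the critical-path bound $|I_{CP}| \le CP$, which is the List Scheduling argument adapted to two resource types. I would trace back a chain of tasks terminating at the task that completes at time $C_{max}$, and argue that during every slot of $I_{CP}$ some task of this chain is executing. The crucial point is that membership in $I_{CP}$ guarantees a free processor of \emph{each} type, so whichever side the next critical task was rounded to, a suitable processor is available; hence whenever that task is not running, it can only be blocked by an unfinished predecessor, so the chain advances in every slot of $I_{CP}$. This is the reasoning of Lemmas~4 and~5 of Kedad-Sidhoum et al.~\cite{LP6}, and I would note that replacing the EST policy by the OLS ordering does not affect it: the ordering only governs which ready task is picked, not the greedy guarantee itself.

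To convert $CP + \frac{W_{CPU}}{m} + \frac{W_{GPU}}{k}$ into $6C_{max}^*$, I would use that the optimal fractional value $\lambda$ of the relaxed HLP lower-bounds $C_{max}^*$, together with the factor-$2$ loss of the rounding rule. Any task rounded to the CPU side has $x_j \ge \frac{1}{2}$, so $\overline{p_j} \le 2\overline{p_j}x_j \le 2(\overline{p_j}x_j + \underline{p_j}(1-x_j))$, and symmetrically on the GPU side; thus each task's rounded processing time is at most twice its fractional contribution. Summing over tasks and invoking Constraints~(\ref{ci3}) and~(\ref{ci4}) gives $\frac{W_{CPU}}{m} \le 2\lambda$ and $\frac{W_{GPU}}{k} \le 2\lambda$, while applying the same per-task doubling along the traced chain and Constraints~(\ref{ci1}),~(\ref{ci6}),~(\ref{ci2}) gives $CP \le 2\lambda$. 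Adding the three estimates produces $C_{max} \le 6\lambda \le 6C_{max}^*$.

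Finally, tightness requires no new instance: OLS is a scheduling policy applied after the rounding of an optimal solution of the relaxed HLP, so Corollary~\ref{cor:hlp} applies verbatim and forces the approximation ratio to be at least $6 - O(\frac{1}{m})$. Combined with the upper bound just established, this shows the ratio of $6$ cannot be improved, which is what the corollary asserts.
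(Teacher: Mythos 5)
Your proposal is correct and follows essentially the same route as the paper: the identical decomposition of $[0,C_{max})$ into $I_{CP}$, $I_{CPU}$, $I_{GPU}$ with the three bounds $|I_{CP}|\leq CP$, $|I_{CPU}|\leq W_{CPU}/m$, $|I_{GPU}|\leq W_{GPU}/k$, each at most $2C_{max}^*$, and tightness obtained from Theorem~\ref{thm:lower6} via Corollary~\ref{cor:hlp}. The only difference is that you spell out the factor-$2$ rounding argument ($x_j\geq\frac{1}{2}$ on the CPU side, and symmetrically on the GPU side) and the chain-tracing argument, which the paper delegates to Lemmas~4 and~5 of Kedad-Sidhoum et al.~\cite{LP6}.
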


\subsection{On-line setting}
\label{ssec:onlineSetting}

In the HLP-EST algorithm, an integer linear program is used to find an efficient allocation of each task to the CPU or GPU side.
Although this program optimizes the classical lower bounds for the makespan, and hence informally optimizes the allocation, the resolution of its relaxation has a high complexity in practice and cannot be used in an on-line setting.
For this reason, we would like to explore some greedy, low complexity, policies.
In this direction, we initially propose the following three simple greedy rules:
\begin{description}
\item[R1] If $\frac{\overline{p_j}}{m} \leq \frac{\underline{p_j}}{k}$ then assign $T_j$ to the CPU side, else assign it to the GPU side.
\item[R2] If $\frac{\overline{p_j}}{\sqrt{m}} \leq \frac{\underline{p_j}}{\sqrt{k}}$ then assign $T_j$ to the CPU side, else assign it to the GPU side.
\item[R3] If $\overline{p_j} \leq \underline{p_j}$ then assign $T_j$ to the CPU side, else assign it to the GPU side.
\end{description}
However, these rules do not take into account neither the critical path nor the actual schedule and they cannot guarantee a bounded approximation ratio.\\

In what follows, we propose to use a more enhanced set of rules which combines a rule based on the structure of the actual schedule with R2, in a similar way as in the 4-competitive algorithm proposed by Chen et al.~\cite{Guochuan} for the on-line problem with independent tasks.

To describe the new rule, we define $\tau_{gpu}$ to be the earliest time when at least one GPU is idle.
Let also $R_{j, gpu} = \max\{\tau_{gpu}, \max_{T_i\in \Gamma^{-}(T_j)}\{C_i\}\}$ be the \emph{ready time} of $T_j$ for GPUs, i.e., the earliest time at which $T_j$ can be executed on a GPU.
Then, the new enhanced set of rules is defined as follows:
\begin{description}
\item[Step 1:] If $\overline{p_j} \geq R_{j, gpu} + \underline{p_j}$ then assign $T_j$ to the GPU side.
\item[Step 2:] Otherwise apply R2.
\end{description}

This set of rules can be combined with a greedy List Scheduling policy that schedules each task as early as possible on the CPU or GPU side already decided by the rules.
We call the algorithm obtained by this combination ER-LS (Enhanced Rules - List Scheduling).
In the following, we give upper and lower bounds for the competitive ratio of ER-LS.
\medskip

\begin{theorem}
\label{theorem:lowerON}
ER-LS is at most a $(4\sqrt{\frac{m}{k}})$-competitive algorithm.
\end{theorem}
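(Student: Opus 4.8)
The plan is to bound the makespan of the schedule produced by ER-LS by comparing the total work it places on the CPU and GPU sides against lower bounds for the optimal makespan $C_{\max}^*$, together with a critical-path argument that accounts for idle time introduced by precedence constraints. Following the structure already used for HLP-OLS, I would partition the interval $I=[0,C_{\max})$ into the slots $I_{CP}$ where at least one processor of each type is idle, and the remaining slots $I_W$ where one entire side is saturated, which splits into $I_{CPU}$ (all CPUs busy) and $I_{GPU}$ (all GPUs busy). This gives the decomposition $C_{\max}\leq |I_{CP}|+|I_{CPU}|+|I_{GPU}| \leq CP + \frac{W_{CPU}}{m} + \frac{W_{GPU}}{k}$, where $CP$ is the length of the critical path in the produced schedule and $W_{CPU}, W_{GPU}$ are the total loads on each side. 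The task is then to show each of these three terms is $O\!\left(\sqrt{m/k}\right)C_{\max}^*$.

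The heart of the argument is to control the loads using the allocation rules. For the load terms, I would argue that whenever a task $T_j$ is assigned to a side, the rules guarantee that its processing time there is comparable, up to a factor $\sqrt{m/k}$, to its processing time on the other side or to a quantity that the optimal solution must also pay. Concretely, if $T_j$ is placed on a CPU by rule R2 in Step 2, then $\frac{\overline{p_j}}{\sqrt{m}} \leq \frac{\underline{p_j}}{\sqrt{k}}$, so $\overline{p_j} \leq \sqrt{\frac{m}{k}}\,\underline{p_j}$; summing, the CPU load contributed by such tasks is at most $\sqrt{\frac{m}{k}}$ times the GPU processing time that the optimum would incur were these tasks run on GPUs. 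Dually, a task placed on a GPU by R2 satisfies $\underline{p_j} < \sqrt{\frac{k}{m}}\,\overline{p_j}$. Combining these with the trivial load lower bounds $\frac{1}{m}\sum_{\text{CPU}}\overline{p_j}\le C_{\max}^*$-type inequalities and the total-work bound $\sum_j \min\{\overline{p_j},\underline{p_j}\} \le (m+k)C_{\max}^*$, I expect $\frac{W_{CPU}}{m}+\frac{W_{GPU}}{k}$ to fall out as $O\!\left(\sqrt{m/k}\right)C_{\max}^*$. The tasks diverted to GPUs by Step 1 must be handled separately: the condition $\overline{p_j}\geq R_{j,gpu}+\underline{p_j}$ there certifies that running $T_j$ on a CPU would be at least as costly as waiting for a GPU and then executing, which is the mechanism that keeps the idle-induced critical path under control.

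For the critical-path term $CP$, I would bound the length of the longest chain in the produced schedule, where each node contributes its execution time on the chosen side and each ``gap'' is an interval during which a task waited for a predecessor or for a free processor of its assigned type. The Step~1 rule is precisely designed so that a task is never forced to idle-wait on the GPU side longer than it would have cost to run on a CPU; this lets me charge any waiting time on the chain back to genuine processing that the optimal schedule must also perform somewhere along a corresponding precedence chain, again losing at most a $\sqrt{m/k}$ factor. I would then assemble the three bounds and tune the constants so the sum is at most $4\sqrt{m/k}\,C_{\max}^*$.

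The main obstacle I anticipate is the critical-path analysis rather than the load analysis: bounding $CP$ requires showing that the idle time accumulated along a chain because of the interaction between Step~1's allocation decisions and the List Scheduling dynamics is properly amortized against $C_{\max}^*$, and this is exactly where the $R_{j,gpu}$ ready-time definition and the asymmetry $m\ge k$ must be used carefully. Getting the constant down to $4$ (rather than some larger constant) will likely demand a sharp accounting of how Step~1 and R2 jointly guarantee that no chain element simultaneously pays a large waiting penalty and a large execution penalty.
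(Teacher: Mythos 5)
Your skeleton coincides with the paper's: the same three-term decomposition $C_{\max} \le \frac{W_{CPU}}{m} + \frac{W_{GPU}}{k} + CP$, and the same use of rule R2 to compare, up to a factor $\sqrt{\frac{m}{k}}$, the algorithm's processing time of a task with what the optimum pays when the two allocations disagree. But there is a genuine gap exactly at the point you defer: the tasks that Step 1 sends to the GPUs while the optimum runs them on CPUs (call this set $\mathcal{SB}_{gpu}$, with total GPU work $sb_{gpu}$). These tasks contribute to $W_{GPU}$, so they must be handled inside the load bound; pushing them into the critical-path argument cannot work, since they need not lie on any chain, and your fallback tool, the total-work bound $\sum_j \min\{\overline{p_j},\underline{p_j}\} \le (m+k)C_{\max}^*$, is provably too weak: it only yields $\frac{W_{GPU}}{k} \le \frac{m+k}{k} C_{\max}^*$, a factor of order $\frac{m}{k}$ rather than $\sqrt{\frac{m}{k}}$. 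The paper's key observation, missing from your proposal, is to look at the last finishing task $T_{j_0}$ of $\mathcal{SB}_{gpu}$: all tasks of $\mathcal{SB}_{gpu}$ run on GPUs and complete by $C_{j_0} = R_{j_0,gpu} + \underline{p_{j_0}}$, hence $\frac{sb_{gpu}}{k} \le R_{j_0,gpu} + \underline{p_{j_0}} \le \overline{p_{j_0}} \le C_{\max}^*$, where the middle inequality is precisely the Step 1 test and the last one holds because the optimum executes $T_{j_0}$ on a CPU. This single inequality, added to the two R2-based per-side inequalities you do sketch, is what gives the load bound $3\sqrt{\frac{m}{k}}\,C_{\max}^*$.

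Conversely, the critical-path term is much easier than you anticipate, and the obstacle you flag is a red herring: no amortization of waiting time is needed, because in the decomposition the slots of $I_{CP}$ are, by the standard List Scheduling argument, covered by tasks of a single precedence chain that are actually executing (any slot where a chain task waits has one side fully busy and is already charged to a load term). One then compares processing times termwise along that chain: tasks allocated as in the optimum contribute equally, Step 1 tasks satisfy $\underline{p_j} \le \overline{p_j}$ (a trivial consequence of the Step 1 condition), and Step 2 disagreement tasks lose at most $\sqrt{\frac{m}{k}}$; hence $CP \le \sqrt{\frac{m}{k}}\, CP^* \le \sqrt{\frac{m}{k}}\, C_{\max}^*$. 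Combining $3\sqrt{\frac{m}{k}}$ for the loads with $\sqrt{\frac{m}{k}}$ for the chain yields the constant $4$, which is where it comes from --- not from any delicate tuning along the chain.
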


\begin{proof}
Let ${W_{CPU}}$, ${W_{GPU}}$ and $CP$ be the total load on all CPUs, the total load on all GPUs and the length of the critical path of a schedule produced by the algorithm, respectively.

With the same reasoning than in Section~\ref{ssec:offlineSetting} and given a schedule produced by ER-LS we can observe that
\begin{equation}
    C_{max} \leq \frac{W_{CPU}}{m} + \frac{W_{GPU}}{k} + CP \label{eqn0}
\end{equation}
In the following, we bound the average load of both sides~($\frac{W_{CPU}}{m} + \frac{W_{GPU}}{k}$) by $3\sqrt{\frac{m}{k}}C_{max}^*$ and the length of the critical path by $\sqrt{\frac{m}{k}}C_{max}^*$.
Recall that $C_{max}^*$ denotes the makespan of the optimal off-line solution of the instance.\\

We denote by $\mathcal{SA}_{cpu}$ (resp. $\mathcal{SA}_{gpu}$) the set containing the tasks placed on the CPU (resp. GPU) side in both a solution of the algorithm and the optimal solution, by $\mathcal{SB}_{gpu}$ the set containing tasks placed by Step 1 on the GPU side in a solution of the algorithm but on the CPU side in the optimal solution, and by $\mathcal{SC}_{cpu}$ (resp. $\mathcal{SC}_{gpu}$) the set containing tasks placed by Step 2 on the CPU (resp. GPU) side in a solution of the algorithm but on the GPU (resp. CPU) side in the optimal solution.

We also denote by $sa_{cpu}$, $sa_{gpu}$, $sb_{gpu}$, $sc_{cpu}$ and $sc_{gpu}$ the sum of processing times of all tasks in the sets $\mathcal{SA}_{cpu}$, $\mathcal{SA}_{gpu}$, $\mathcal{SB}_{gpu}$, $\mathcal{SC}_{cpu}$ and $\mathcal{SC}_{gpu}$, respectively. Note that we use here the processing times according to the allocation of ER-LS\\

\noindent\textbf{Bounding the loads.}\\
Consider $T_{j_0}$ to be the last finishing task in $\mathcal{SB}_{gpu}$.
Since the task is scheduled according to Step 1, we know that $\overline{p_{j_0}} \geq R_{j_0, gpu} + \underline{p_{j_0}} \geq \frac{sb_{gpu}}{k}$. We also know that $T_{j_0}$ is scheduled on a CPU in the optimal solution so we have $\overline{p_{j_0}} \leq C_{max}^*$ and thus: $\frac{sb_{gpu}}{k} \leq C_{max}^*$.

Each task in $\mathcal{SC}_{gpu}$ is scheduled on the CPU side in the optimal solution. According to Step 2, the total processing times of tasks in $\mathcal{SC}_{gpu}$ in the optimal solution is at least $\sqrt{\frac{m}{k}}sc_{gpu}$, so we have for the cpu side $\frac{sa_{cpu} + \sqrt{\frac{m}{k}}sc_{gpu}}{m} \leq C_{max}^*$.
The same reasoning for the GPU side gives $\frac{sa_{gpu} + \sqrt{\frac{k}{m}}sc_{gpu}}{k} \leq C_{max}^*$.

By adding the three inequalities we have the following:
\[ \frac{sb_{gpu}}{k} + \frac{sa_{cpu} + \sqrt{\frac{m}{k}}sc_{gpu}}{m} + \frac{sa_{gpu} + \sqrt{\frac{k}{m}}sc_{cpu}}{k} \leq 3C_{max}^* \]

By separating the loads on CPU and on GPU on the left-hand side of the above inequality and taking into account that $m \geq k$ we have
\[ \frac{sa_{cpu}}{m}+\frac{sc_{cpu}}{\sqrt{mk}} \geq \frac{sa_{cpu} + sc_{cpu}}{m} \geq \sqrt{\frac{k}{m}}\frac{sa_{cpu} + sc_{cpu}}{m} \]
and
\[ \frac{sa_{gpu} + sb_{gpu}}{k} + \frac{sc_{gpu}}{\sqrt{mk}} \geq \frac{sa_{gpu} + sb_{gpu}}{k} + \frac{sc_{gpu}}{k}\sqrt{\frac{k}{m}} \geq \sqrt{\frac{k}{m}}\frac{sa_{gpu} + sb_{gpu} + sc_{gpu}}{k} \]
Summing these two bounds we finally obtain
\[ \sqrt{\frac{k}{m}} (\frac{sa_{cpu} + sc_{cpu}}{m} + \frac{sa_{gpu} + sb_{gpu} + sc_{gpu}}{k} ) \leq 3C_{max}^* \]
and thus
\begin{equation} 
\frac{W_{CPU}}{m} + \frac{W_{GPU}}{k} \leq 3\sqrt{\frac{m}{k}}C_{max}^* \label{eqn1}
\end{equation}

\medskip
\noindent\textbf{Bounding the critical path.}\\
Consider the sets $\mathcal{SA}_{cpu}^{CP} \subseteq \mathcal{SA}_{cpu}$, $\mathcal{SA}_{gpu}^{CP} \subseteq \mathcal{SA}_{gpu}$, $\mathcal{SB}_{gpu}^{CP} \subseteq \mathcal{SB}_{gpu}$, $\mathcal{SC}_{cpu}^{CP} \subseteq \mathcal{SC}_{cpu}$ and $\mathcal{SC}_{gpu}^{CP} \subseteq \mathcal{SC}_{gpu}$ to be the sets containing only the tasks belonging to the critical path obtained by the algorithm, with the same notation in lower case for the sum of processing times of all tasks in each set and the same notation with a star $^*$ for the sum of processing times of all tasks in the optimal solution.

For the sets $\mathcal{SA}_{cpu}^{CP}$ and $\mathcal{SA}_{gpu}^{CP}$, by definition, we have 
\begin{equation*}
sa_{cpu}^{CP} + sa_{gpu}^{CP} = sa_{cpu}^{CP^*} + sa_{gpu}^{CP^*}
\end{equation*}

According to Step 1, every task in $\mathcal{SB}_{gpu}^{CP}$ has a processing time smaller than that in the optimal solution, so $sb_{gpu}^{CP} \leq sb_{gpu}^{CP^*}$.
According to Step 2, every task $T_j$ in $\mathcal{SC}_{cpu}^{CP}$ (resp. $\mathcal{SC}_{gpu}^{CP}$) verifies $\overline{p_j} \leq \sqrt{\frac{m}{k}}\underline{p_j}$ (resp. $\underline{p_j} \leq \sqrt{\frac{k}{m}}\overline{p_j}$), so we have $sc_{cpu}^{CP} \leq \sqrt{\frac{m}{k}}sc_{cpu}^{CP^*}$ and $sc_{gpu}^{CP} \leq \sqrt{\frac{m}{k}}sc_{gpu}^{CP^*}$.

By summing the previous inequalities for the critical path we get
\begin{align*}
    CP &= sa_{cpu}^{CP} + sa_{gpu}^{CP} + sb_{gpu}^{CP} + sc_{cpu}^{CP} + sc_{gpu}^{CP} \\
    &\leq \sqrt{\frac{m}{k}}(sa_{cpu}^{CP^*} + sa_{gpu}^{CP^*} + sb_{gpu}^{CP^*} + sc_{cpu}^{CP^*} + sc_{gpu}^{CP^*}) \leq \sqrt{\frac{m}{k}}CP^*
\end{align*}
Since $CP^* \leq C_{max}^*$, we have $CP \leq \sqrt{\frac{m}{k}}C_{max}^*$ and, combining this inequality with Equations (\ref{eqn0}) and (\ref{eqn1}), the theorem follows.

\end{proof}

As the following theorem shows, the competitive ratio of ER-LS is almost tight and we cannot expect a much better analysis for its upper bound.

\begin{theorem}
There is an instance for which ER-LS achieves a competitive ratio of $\sqrt{\frac{m}{k}}$.
\end{theorem}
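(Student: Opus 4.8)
The plan is to exhibit a single instance---a long chain---on which the rules of ER-LS are forced to run almost the entire chain on CPUs, whereas the optimum runs it on GPUs. Concretely, I would take one chain $T_1 \to T_2 \to \cdots \to T_L$ of $L$ tasks, each with $\overline{p_i} = \sqrt{m/k}$ and $\underline{p_i} = 1$, so that the acceleration ratio $\overline{p_i}/\underline{p_i} = \sqrt{m/k}$ sits exactly on the threshold of rule R2 (indeed $\overline{p_i}/\sqrt{m} = \underline{p_i}/\sqrt{k}$). Since the only precedence-respecting arrival order of a chain is $T_1, T_2, \ldots, T_L$, the on-line input is forced and ER-LS's behaviour is completely determined. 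As each task needs time at least $1$ and the tasks are sequential, the optimum places the whole chain on a single GPU, giving $C_{\max}^* = L$.

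Next I would trace ER-LS step by step. When $T_i$ is considered, a GPU is idle no later than the completion $C_{i-1}$ of its predecessor, so $\tau_{gpu} \leq C_{i-1}$ and hence $R_{i,gpu} = \max\{\tau_{gpu}, C_{i-1}\} = C_{i-1}$; Step~1 therefore places $T_i$ on a GPU iff $\sqrt{m/k} \geq C_{i-1} + 1$. For the first task $C_0 = 0$, so this holds and $T_1$ goes to a GPU, completing at time $1$; inductively, while tasks are placed on GPUs we have $C_{i-1} = i-1$, and Step~1 keeps choosing GPU precisely while $i \leq \sqrt{m/k}$. Thus only the first $\lfloor\sqrt{m/k}\rfloor$ tasks are diverted to GPUs. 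From that point on $C_{i-1} \geq \sqrt{m/k} > \sqrt{m/k} - 1$, so the inequality of Step~1 fails, R2 is invoked, and since the tasks lie exactly on its threshold they are all assigned to the CPU side, each taking time $\sqrt{m/k}$. The chain being sequential and a CPU always being available, this yields $C_{\max} = \lfloor\sqrt{m/k}\rfloor + (L - \lfloor\sqrt{m/k}\rfloor)\sqrt{m/k}$.

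Dividing the two makespans gives a ratio $\sqrt{m/k} - \frac{\lfloor\sqrt{m/k}\rfloor(\sqrt{m/k}-1)}{L}$, which tends to $\sqrt{m/k}$ as $L \to \infty$; taking $L$ large enough makes the ratio at least $\sqrt{m/k} - \varepsilon$ for every $\varepsilon > 0$, which proves the claimed lower bound on the competitive ratio.

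The step I expect to be the main obstacle---and the one I would verify most carefully---is ruling out that Step~1 ever diverts a task to a GPU \emph{after} the initial segment: I must confirm that once a task lands on a CPU the predecessor completion times $C_{i-1}$ only grow, so $\sqrt{m/k} \geq C_{i-1} + 1$ stays false for all later tasks, and that no processor-availability constraint introduces extra idle time in either schedule. A secondary point is that the prefix of $O(\sqrt{m/k})$ GPU-scheduled tasks is genuinely negligible, which is why $L$ is taken arbitrarily large; if one instead wanted the ratio attained without a limit, one would prepend a batch of GPU-only filler tasks to raise $\tau_{gpu}$ above $\sqrt{m/k} - 1$ before the chain arrives, so that even $T_1$ is routed by R2 to a CPU.
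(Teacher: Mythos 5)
Your proof is correct, and its core idea coincides with the paper's: build a chain whose tasks sit exactly on the threshold of rule R2 (ratio $\overline{p_j}/\underline{p_j}=\sqrt{m/k}$), so that the tie-breaking of R2 sends every chain task to a CPU, while the optimum runs the whole chain on a single GPU. Where you diverge is in how Step~1 is neutralized. The paper's instance adds $k$ independent ``blocking'' tasks with $\overline{p_j}=\underline{p_j}=\sqrt{m}$, which ER-LS places on all $k$ GPUs at time $0$; this raises $\tau_{gpu}$ immediately, so Step~1 never fires for any chain task, the chain (of length $m$, with times $\sqrt{m}/\sqrt{k}$) runs entirely on CPUs, and the ratio $m\sqrt{m}/(m\sqrt{k})=\sqrt{m/k}$ is attained exactly by a single finite instance. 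You instead use a pure chain and let Step~1 capture the prefix of $\lfloor\sqrt{m/k}\rfloor$ tasks whose completion times are still small, then amortize that prefix away by letting $L\to\infty$; your trace (monotonicity of $C_{i-1}$ guaranteeing Step~1 never re-fires, $R_{i,gpu}=C_{i-1}$, optimality of the all-GPU schedule) is sound. The only gap relative to the literal statement is that your family of instances achieves ratios $\sqrt{m/k}-\varepsilon$ for every $\varepsilon>0$ but no single instance attains $\sqrt{m/k}$ exactly; this suffices for the intended meaning (the competitive ratio of ER-LS is at least $\sqrt{m/k}$, so the $4\sqrt{m/k}$ analysis is tight up to the constant), and your closing remark about prepending GPU filler tasks to push $\tau_{gpu}$ above the Step~1 threshold is, once fleshed out, exactly the paper's construction and yields the exact bound.
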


\begin{proof}
Consider a hybrid system with $m$ CPUs and $k\leq m$ GPUs.
The instance consists of $m+k$ tasks that are partitioned into 2 sets as shown in Table~\ref{tab:lowerERLS}.

\begin{table}
\centering
\begin{tabular}{|c|c|c|c|}
	\toprule
	Type & Number of task & Processing time on CPU/GPU \\
	\midrule
	A & $k$ & $\sqrt{m} ~/~ \sqrt{m}$ \\
	\midrule
	B & $m$ & $ \sqrt{m} ~/~ \sqrt{k}$ \\
	\bottomrule
\end{tabular}
\caption{Sets of tasks composing the instance for which ER-LS achieves a competitive ratio of $\sqrt{\frac{m}{k}}$.}\label{tab:lowerERLS}
\end{table}

The $k$ tasks of type A are independent to each other and the $m$ tasks of type B are with precedence constraints as follows:
\[ B_1 \prec B_2 \prec \cdots \prec B_m\]

The tasks are ordered in a list by first taking all tasks of type A and then the tasks of type B respecting the precedences.\\

The ER-LS algorithm will first place the $k$ tasks of type A on a GPU according to Step 1. The completion time of these tasks is then $\sqrt{m}$.
Then, since $\sqrt{m} \leq \sqrt{m} + \sqrt{k}$, the task $B_1$ will be placed on a CPU according to Step 2, with completion time $\sqrt{m}$. The task $B_2$ will also be placed on a CPU according to Step 2, starting at time $\sqrt{m}$ and completing at time $2\sqrt{m}$. With the same reasoning, each task $B_i$, $i\in \{1, m\}$ is placed on a CPU according to Step 2 starting at time $(i-1)\sqrt{m}$ and completing at time $i\sqrt{m}$.

Thus, the schedule produced by ER-LS for this instance has a makespan of $C_{max} = m\sqrt{m}$.

An optimal schedule would have all tasks of type $A$ placed on the CPU side with a completion time for each task of $\sqrt{m}$. The tasks of type $B$ would be placed on the GPU side with a completion time for each task $B_i$, $i\in \{1,m\}$, of $i\sqrt{k}$. Thus, the optimal makespan is $C_{max}^* = m\sqrt{k}$.

Hence, ER-LS achieves a competitive ratio $\frac{C_{max}}{C_{max}^*} = \sqrt{\frac{m}{k}}$ for this instance and the theorem holds.
\end{proof}

\section{Generalization on Q Resource Types}\label{sec:generalization}

We now generalize the HLP-EST algorithm to extend the addressed problem to $Q\geq2$ different types of identical processors.
Before continuing we need some additional notations.
Let $M_q$ be the set of processors of type $q$, $1 \leq q \leq Q$, and $m_q=|M_q|$ its size.
The execution of a task $T_j \in \mathcal{T}$ on a processor of type $q$, $1 \leq q \leq Q$, takes $p_{j,q}$ time units.\\

In what follows we adapt HLP to take into account more resource types.
To do this, we introduce a binary variable $x_{j,q}$ which indicates if the task $T_j \in \mathcal{T}$ is assigned to the resource type $q$.
As before, let $C_j$ be a variable corresponding to the completion time of $T_j$ and $\lambda$ be the variable that represents a lower bound to the makespan.
Then, we consider the following modification of HLP, which we call QHLP:
\begin{align}
& \text{minimize } \lambda \notag\\
& C_i + \sum_{q=1}^{Q}{p_{j,q}x_{j,q}} \leq C_j && \forall T_j \in \mathcal{T}, T_i \in \Gamma^{-}(T_j) \label{gi1}\\
& \sum_{q=1}^{Q}{p_{j,q}x_{j,q}} \leq C_j && \forall T_j \in \mathcal{T}: \Gamma^{-}(T_j)=\emptyset \label{gi2}\\
& C_j \leq \lambda && \forall T_j \in \mathcal{T} \label{gi3}\\
& \frac{1}{m_q} \sum_{T_j \in \mathcal{T}} p_{j,q}x_{j,q} \leq \lambda && 1 \leq q \leq Q \label{gi4}\\
& \sum_{q=1}^{Q}{x_{j,q}} = 1 && \forall T_j \in \mathcal{T} \label{gi5} \\
& x_{j,q} \in \{0,1\} && \forall T_j \in \mathcal{T},~1 \leq q \leq Q \label{gi6}\\
& C_j \geq 0 && \forall T_j \in \mathcal{T} \notag\\
& \lambda \geq 0 \notag
\end{align}
The main difference here concerns Constraint~(\ref{gi5}) which assures that each task is integrally assigned to exactly one type of resources.

After relaxing the integrity Constraint~(\ref{gi6}) of QHLP, we can solve in polynomial time the obtained relaxation.
To get an integral allocation, we assign each task $T_j$ to the resource type $q'$ for which the assignment variable $x_{j,q}$ have the greatest value, i.e., $q'=\text{argmax}_{1 \leq q \leq Q}\{x_{j,q}\}$.
In other words, for such $q'$ we set $x_{j,q'} = 1$ and $x_{j,q} = 0$ for any $q \neq q'$.
In case of ties, we give priority to the resource type in which $T_j$ has the smallest processing time.
Once the assignment step is done, we use the Earliest Starting Time policy taking into account the precedence constraints as well as the allocation provided by the rounding of $x_{j,q}$ variables.
We call this algorithm QHLP-EST.

Then, the following theorem holds.

\begin{theorem}
\label{thm:analyseq}
QHLP-EST achieves an approximation ratio of $Q(Q+1)$. This ratio is tight.
\end{theorem}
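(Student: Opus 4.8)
The plan is to establish the upper bound and the tightness separately, mirroring the structure of the two-type analysis. For the upper bound, I would extend the load-and-critical-path decomposition used for HLP-OLS in Section~\ref{ssec:offlineSetting} to $Q$ resource types. Given a schedule produced by QHLP-EST, I partition the time interval $[0,C_{\max})$ into slots where at least one processor of each of the $Q$ types is idle (contributing to the critical path bound) and the remaining slots where, for each type $q$, all processors of that type are busy. This yields a bound of the form
\[ C_{\max} \leq CP + \sum_{q=1}^{Q} \frac{W_q}{m_q}, \]
where $W_q = \sum_{T_j} p_{j,q} x_{j,q}$ is the total load on type $q$ and $CP$ is the critical path length of the schedule. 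The heart of the argument is then to bound each of the $Q+1$ terms on the right by $Q \cdot C_{\max}^*$, so that their sum is at most $Q(Q+1) C_{\max}^*$.

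The key quantitative step is to control the rounding loss. After relaxing Constraint~(\ref{gi6}), the fractional optimum $\lambda^*$ satisfies $\lambda^* \leq C_{\max}^*$, and the rounding assigns $T_j$ to the type $q'$ maximizing $x_{j,q}$. Since there are $Q$ types and $\sum_q x_{j,q} = 1$, the chosen variable satisfies $x_{j,q'} \geq \frac{1}{Q}$, hence $p_{j,q'} \leq Q \sum_{q=1}^{Q} p_{j,q} x_{j,q}$. This is the crucial inequality: the processing time of a task under the integral allocation is at most $Q$ times its fractional ``expected'' processing time. Summing this over all tasks assigned to a fixed type $q'$ and using the load constraint~(\ref{gi4}) gives $\frac{W_{q'}}{m_{q'}} \leq Q\lambda^* \leq Q C_{\max}^*$ for each $q'$, which handles the $Q$ load terms. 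For the critical path, I would apply the same per-task factor of $Q$ along the path realizing $CP$ in the QHLP-EST schedule and compare it to the fractional critical-path constraints~(\ref{gi1}),~(\ref{gi2}),~(\ref{gi3}), obtaining $CP \leq Q \lambda^* \leq Q C_{\max}^*$. Combining everything yields the claimed $Q(Q+1)$ ratio.

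For tightness, I would construct an instance generalizing the worst-case family behind Theorem~\ref{thm:lower6} (which is the $Q=2$ case, giving $6 = Q(Q+1)$). The goal is to design tasks whose fractional QHLP optimum spreads each task's assignment evenly across all $Q$ types (each $x_{j,q} \approx \frac{1}{Q}$), so that the rounding pays the full factor $Q$ on the load, while a chain of precedence-linked task groups forces the schedule to stack $Q$ phases sequentially and pay another factor of roughly $Q$ on the makespan. The processing times and set sizes must be tuned so that the relaxed QHLP admits a balanced fractional solution of objective value $\lambda \approx C_{\max}^*$, yet the integral rounding produces a makespan approaching $Q(Q+1) C_{\max}^*$; as in Theorem~\ref{thm:lower6}, the lower bound should hold for any scheduling policy following the allocation, not just EST.

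The main obstacle I anticipate is the tightness construction rather than the upper bound. Achieving the factor $Q$ on \emph{both} the load and the critical path simultaneously requires a delicate instance: the fractional solution must genuinely want to split each task across all $Q$ types (otherwise the rounding loses less than a factor $Q$), and the precedence structure must chain exactly $Q$ ``bad'' phases so that no clever schedule can overlap them. Verifying that the proposed balanced fractional assignment is truly optimal for the relaxed QHLP (the analogue of Proposition~\ref{prop:lowerlp}) will demand checking all of constraints~(\ref{gi1})--(\ref{gi5}) and confirming no better fractional solution exists, which is the most error-prone part of the argument.
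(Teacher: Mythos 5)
Your upper-bound argument is essentially the paper's own proof: the same decomposition of $[0,C_{\max})$ into the slots $I_{CP}$ where every type has an idle processor (covered by a chain of tasks) and the per-type fully-busy slots, and the same key observation that the rounding picks a type $q'$ with $x_{j,q'}^{R} \geq \frac{1}{Q}$, giving a factor-$Q$ loss per task against the fractional optimum, whence $C_{\max} \leq |I_{CP}| + \sum_{q} |I_q| \leq Q(Q+1)C_{\max}^*$. One technical slip: for the load terms, the inequality you should sum is the type-specific one, $p_{j,q'} \leq Q\, p_{j,q'} x_{j,q'}^{R}$, so that $W_{q'} \leq Q \sum_{j} p_{j,q'} x_{j,q'}^{R} \leq Q\, m_{q'} \lambda^*$ by Constraint~(\ref{gi4}). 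The cross-type form you state, $p_{j,q'} \leq Q \sum_{q} p_{j,q} x_{j,q}^{R}$, is the right one for the critical path (where the chain constraints~(\ref{gi1})--(\ref{gi3}) involve $\sum_{q} p_{j,q} x_{j,q}^{R}$), but summing it over the tasks assigned to $q'$ produces a quantity that Constraint~(\ref{gi4}) for type $q'$ alone does not control. The fix is immediate and uses nothing beyond your own observation $x_{j,q'}^{R} \geq \frac{1}{Q}$.

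The genuine gap is the tightness claim, which you leave as an open construction and correctly flag as the hard, error-prone part --- but it never needs to be carried out. The paper's tightness argument is one line: for $Q=2$, QHLP-EST coincides with HLP-EST and $Q(Q+1)=6$, so the instance of Theorem~\ref{thm:lower6} (which achieves ratio $6-O(\frac{1}{m})$ regardless of the scheduling policy applied after the allocation) already witnesses tightness of the bound. A worst-case family for every $Q$, with each $x_{j,q}^{R} \approx \frac{1}{Q}$ and $Q$ chained phases, would be a strictly stronger statement than what the theorem as proved asserts; as your proposal stands, that part is a plan with acknowledged unresolved difficulties rather than a proof, so you should either substitute the simple $Q=2$ observation or actually complete the construction.
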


\begin{proof}
We analyze the structure of a schedule produced by the algorithm to give an upper bound on the approximation ratio.
The analysis of the algorithm and the structure of the schedule are similar to the ones of Kedad-Sidhoum \emph{et al.} for the HLP-EST algorithm \cite{LP6}.

We denote by $W_q$, $1 \leq q \leq Q$, the total load on all processors of type $q$ in the obtained schedule.
We also denote by $C_{max}^{R}$, $W_q^{R}$ and $L^{R}$ the objective value, the total load on all processors of type $q$ and the length of the longest path in the fractional optimal solution of the relaxed QHLP, respectively.
Finally, we define by $C_{\max}^{*}$ the optimal makespan over all feasible schedules for our problem.
Then, the following inequalities hold.
\begin{align}
 \label{ineqQ:L}L^{R} &\leq C_{\max}^{R} \leq C_{\max}^{*}\\
 \label{ineqQ:Wq}\frac{W_q^{R}}{m_q} &\leq C_{\max}^{R} \leq C_{\max}^{*} ,~1 \leq q \leq Q
\end{align}

To analyze the structure of the schedule, we partition the time interval of the schedule $I=[0, C_{max})$ into two disjoint subsets of intervals $I_{CP}$ and $I_{W}$.
The set $I_{CP}$ contains every time slot where at least one processor of each type is idle, while the set $I_W$ consists of the remaining time slots in $I$, i.e., $I_W = I \backslash I_{CP}$.
We then can divide the set $I_W$ into $Q$, possibly non-disjoint, subsets $I_q$, $1 \leq q \leq Q$, which contain respectively every time slot where all processors of type $q$ are busy.
Henceforth, we denote by $|I|$ the length of $I$, i.e. the number of unitary time slots in $I$.
Then, we have that
\[ C_{\max} = |I| \leq |I_{CP}| + \sum_{q=1}^{Q}{|I_q|} \]

In the following, we will bound above by $QC_{max}^{*}$ the length of the subset $I_{CP}$ and each subsets $I_i$, $1 \leq i \leq Q$.

Due to the rounding policy, we know that if $x_{j,q}=1$ then $x_{j,q}^{R} \geq \frac{1}{Q}$.
Hence, we have
\begin{align}
\label{ineqQ:xj} x_{j,q} \leq Q \cdot x_{j,q}^{R} ~~\forall T_j\in \mathcal{T}, 1 \leq q \leq Q
\end{align}

Consider first the subset of intervals $I_{CP}$.
There is a directed path $\mathcal{P}$ of tasks being executed during any time slot in $I_{CP}$.
The construction of $\mathcal{P}$ is the same as described by Kedad-Sidhoum \emph{et al.}~\cite{LP6}.
Since the directed path $\mathcal{P}$ covers every time slot in $I_{CP}$, the length of $I_{CP}$ is smaller than the length of $\mathcal{P}$ and the length of $\mathcal{P}$ in the optimal solution of $QHLP$, noted $\mathcal{P}^{R}$, is smaller than $L^{R}$.
Thus, using the inequalities~(\ref{ineqQ:L}) and~(\ref{ineqQ:xj}), we have the following bound:
\begin{eqnarray*}
|I_{CP}| \leq |\mathcal{P}| & \leq & \sum_{j\in \mathcal{P}}{\sum_{q=1}^{Q}{p_{j,q}x_{j,q}}} \\
& \leq & Q \sum_{j\in \mathcal{P}}{\sum_{q=1}^{Q}{p_{j,q}x_{j,q}^{R}}} = Q \cdot | \mathcal{P}^R | \\
& \leq & Q \cdot L^R \leq Q \cdot C_{\max}^*
\end{eqnarray*}

Consider now each subset $I_q$, $1 \leq q \leq Q$.
For each time slot in $I_q$ all processors of type $q$ are busy, so $|I_q|$ is smaller than the average load on all the processors of type $q$.
Using the inequalities~(\ref{ineqQ:Wq}) and~(\ref{ineqQ:xj}), we have the following bounds:
\begin{eqnarray*}
|I_q| \leq \frac{W_q}{m_q} & \leq & \frac{1}{m_q}\sum_{x_{j,q}=1}{p_{j,q}} \\
& \leq & \frac{Q}{m_q}\sum_{j\in V}{p_{j,q}x_{j,q}^{R}} \\
& \leq & Q \cdot \frac{W_q^R}{m_q} \leq Q \cdot C_{\max}^*
\end{eqnarray*}

Thus, by combining the calculated bounds we get
\begin{eqnarray*}
C_{\max} = |I| & \leq & |I_{CP}| + \sum_{q=1}^{Q}{|I_q|} \\
& \leq & Q(Q+1)C_{max}^{*}
\end{eqnarray*}

The tightness comes directly from Theorem~\ref{thm:lower6}, and hence the theorem follows.
\end{proof}

The same reasoning can be used to show that QHLP-OLS achieves an approximation ratio of $Q(Q+1)$ and that this ratio is tight.

\section{Experiments}
\label{sec:Experiments}
In this section we compare the performance of various scheduling algorithms by a simulation campaign using a benchmark composed of 6 parallel applications.
First, we compare the off-line algorithms for the studied problem with 2 and 3 resource types. We then compare the on-line algorithms for 2 resource types.

\subsection{Benchmark}
The benchmark is composed of 5 applications generated by \emph{Chameleon}, a dense linear algebra software which is part of the MORSE project~\cite{agullo2012morse}, and a more irregular application (\emph{fork-join}) generated using \emph{GGen}, a library for generating directed acyclic graphs~\cite{GGen:simutools10}.

The applications of \emph{Chameleon}, named \emph{getrf}, \emph{posv}, \emph{potrf}, \emph{potri} and \emph{potrs}, are composed of multiple sequential basic tasks of linear algebra.
Different number, denoted by $nb\_blocks$, and sizes, denoted by $block\_size$, of sub-matrices have been used for the applications; specifically we set $nb\_blocks\in \{5, 10, 20\}$ and $block\_size\in\{64, 128, 320, 512, 768, 960\}$.
Different tilings of the matrices have been used, varying the number of sub-matrices denoted by $nb\_blocks$ and the size of the sub-matrices denoted by $block\_size$. 
Table~\ref{table:nbTasks1} shows the total number of tasks for each application and each value of $nb\_blocks$. Notice that $bloc\_size$ does not impact the number of tasks.

\begin{table}
\centering
\begin{tabular}{|c|c|c|c|c|c|}
	\toprule
	\backslashbox{\bf nb$\_$blocks}{\bf app} & {\bf getrf} & {\bf posv} & {\bf potrf} & {\bf potri} & {\bf potrs}\\
	\midrule
	{\bf 5} & 55 & 65 & 35 & 105 & 30 \\
	\midrule
	{\bf 10} & 385 & 330 & 220 & 660 & 110 \\
	\midrule
	{\bf 20} & 2870 & 1960 & 1540 & 4620 & 420 \\
	\bottomrule
\end{tabular}
\caption{Total number of task of each Chameleon application as a function of $nb\_blocks$.}\label{table:nbTasks1}
\end{table}

For the setting with 2 resource types, the applications were executed with the runtime StarPU~\cite{Augonnet:StarPU} on a machine with two Dual core Xeon E7 v2 with a total of 10 physical cores with hyper-threading of 3 GHz and 256 GB of RAM. The machine had 4 GPUs NVIDIA Tesla K20 (Kepler architecture) with each 5 GB of memory and 200 GB/s of bandwidth.

For 3 resource types, the applications were executed with the runtime StarPU~\cite{Augonnet:StarPU} on an Intel  Dual core i7-5930k machine with a total of 6 physical cores with hyper-threading of 3.5 GHz and 12 GB of RAM.
This machine had 2 NVIDIA GPUs: a GeForce GTX-970 (Maxwell architecture) with 4 GB of memory and 224 GB/s of bandwidth; and a Quadro K5200 (Kepler architecture) with 8 GB of memory and 192 GB/s of bandwidth.

The running time of each task composing the applications was stored for each resource type.\\

The \emph{fork-join} application corresponds to a real situation where the execution starts sequentially and then forks to $width$ parallel tasks.
The results are aggregated by performing a join operation, completing a phase. 
This procedure can be repeated $p$ times, the number of phases. 
For our experiments, we used $p\in\{2, 5, 10\}$ and $width \in\{100, 200, 300, 400, 500\}$. Table~\ref{table:nbTasks2} shows the total number of tasks for each value of $width$ and $p$.
The processing time of each task on CPU was computed using a Gaussian distribution with center $p$ and standard deviation $\frac{p}{4}$.
We established various acceleration factors for the processing time on the first type of GPU.
In all configurations, there are five percent of parallel tasks in each phase, randomly chosen, with an acceleration factor uniformly chosen in $[0.1,0.5]$ while the remaining tasks have an acceleration factor in $[0.5,50]$. We used the same process to generate the processing times for the second type of GPU.

\begin{table}
\centering
\begin{tabular}{|c|c|c|c|c|c|}
	\toprule
	\backslashbox{\bf p}{\bf width} & {\bf 100} & {\bf 200} & {\bf 300} & {\bf 400} & {\bf 500} \\
	\midrule
	{\bf 2} & 203 & 403 & 603 & 803 & 1003 \\
	\midrule
	{\bf 5} & 506 & 1006 & 1506 & 2006 & 2506 \\
	\midrule
	{\bf 10} & 1011 & 2011 & 3011 & 4011 & 5011 \\
	\bottomrule
\end{tabular}
\caption{Total number of tasks of the \emph{fork-join} application as a function of \emph{width} and \emph{p}.}
\label{table:nbTasks2}
\end{table}

The data sets and other information are available under Creative Commons Public License\footnote{Hosted at: \url{https://github.com/marcosamaris/heterogeneous-SWF}, last visited on Nov. 2017.}.

\subsection{Off-line setting}

\noindent\textbf{Algorithms and machine configurations.}\\
We compared the performance, in terms of makespan, of HLP-OLS (Section~\ref{ssec:offlineSetting}) with HLP-EST and HEFT (Section~\ref{sec:LowerBounds}). The algorithms were implemented in Python (v. 2.7.6). The command-line \emph{glpsol} (v. 4.52) solver of the GLPK package was used for the linear program.
Each algorithm was implemented with a second version adapted for 3 types of resources, using the generalization of the algorithms presented in Section~\ref{sec:generalization} for the 2 linear program-based algorithms. We denote by QHLP-EST, QHLP-OLS and QHEFT these algorithms for 3 resource types.

For the machine configurations, we determined different sets of pairs (Nb$\_$CPUs, Nb$\_$GPUs). Specifically, we used 16, 32, 64 and 128 CPUs with 2, 4, 8 and 16 GPUs for a total of 16 machine configurations for the case with 2 resource types.
For the case with 3 resource types, we determined different sets of triplets (Nb$\_$CPUs, Nb$\_$GPU1s, Nb$\_$GPU2s) with the name numbers of CPUs and for either types of GPUs, for a total number of 64 machine configurations.

We executed the algorithms only once with each combination of application and machine configuration since all algorithms are deterministic.
For each run, we stored the optimal objective solution of the linear program, denoted by $LP^*$, and the makespans of the six algorithms.

For the biggest instances of applications, the linear program resolution took about 100 seconds while the running time of each algorithm took at most 10 seconds, once a solution of HLP was found for the linear program-based algorithms.\\

\noindent\textbf{Results for 2 resource types.}\\
To study the performance of the 3 algorithms we computed the ratio between each makespan and the optimal solution $LP^*$ of the linear program HLP, which corresponds to a good lower bound of the optimal makespan.
Fig.~\ref{fig:plotR} shows the ratio of each instance of application and configuration. Notice that the red / bigger dot represents the mean value of the ratio for each application. We can see that HLP-EST is outperformed, on average, by the two other algorithms. The performances of HLP-OLS and HEFT are quite similar, on average, but we observe that HEFT creates more outliers.

\begin{figure}
\centering
\includegraphics[width=.6\textwidth]{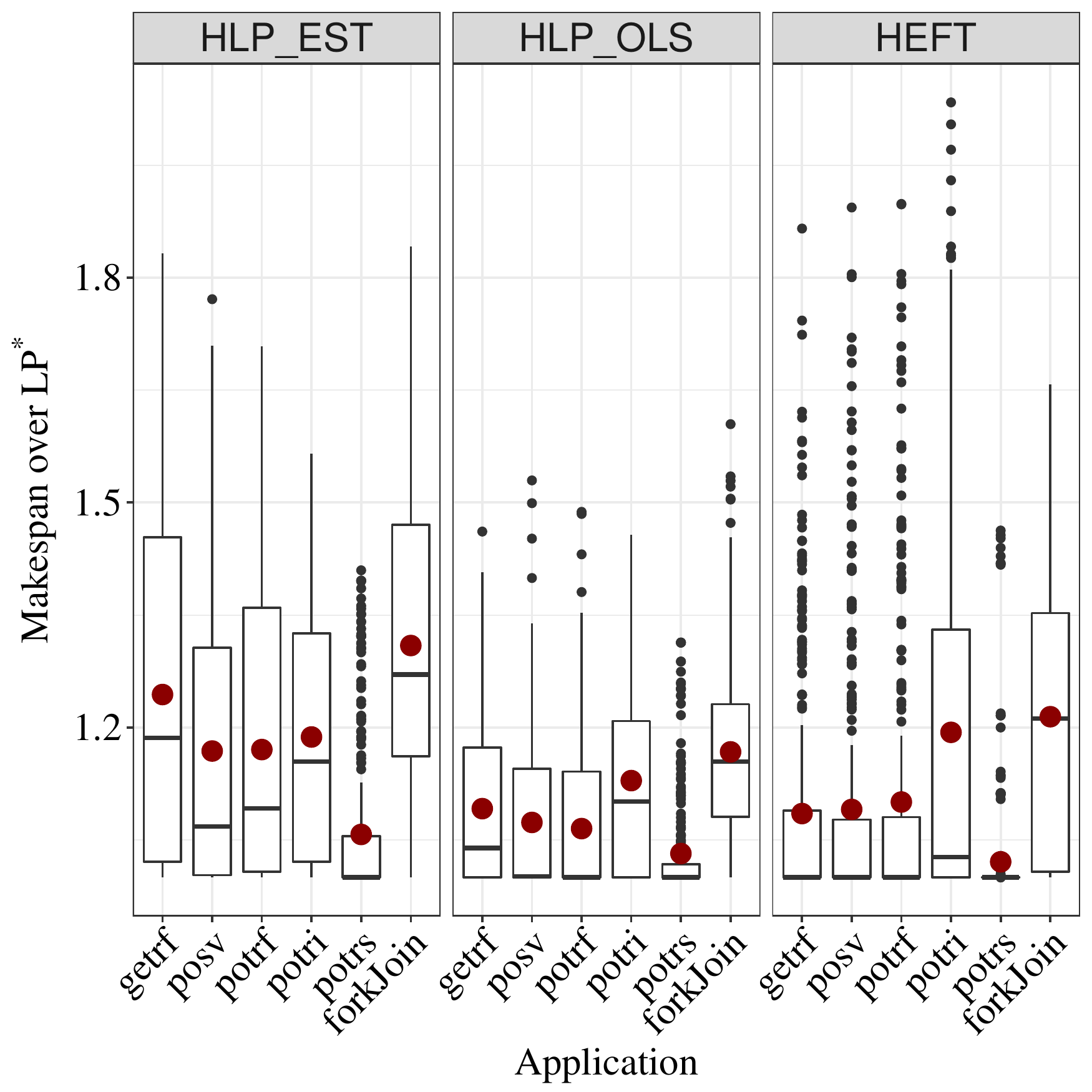}
\caption{Ratio of makespan over $LP^*$ for each instance, grouped by application for the off-line algorithms with 2 resource types.}
\label{fig:plotR}
\end{figure}

Fig.~\ref{fig:plotComp5} compares more specifically the two HLP-based algorithms (left), and the algorithms HLP-OLS and HEFT (right), by showing the ratio between the makespans of the two algorithms. We can see that HLP-OLS outperforms HLP-EST, except for a few instances with the application \emph{potri}, with an improvement close to 8\% on average. 
Comparing HLP-OLS and HEFT we notice that, even if the two algorithms have similar performances, HEFT is on average outperformed by HLP-OLS by 2\%, with a maximum of 60\% of improvement for HLP-OLS with some instances of \emph{potri}.
Moreover, HEFT has a significantly worse performance than HLP-OLS in strongly heterogeneous applications where there is a bigger perturbation in the (dis-)acceleration of the tasks on the GPU side, like \emph{forkJoin}, since in these irregular cases the allocation problem becomes more critical.\\

\begin{figure}
\centering
     \subfloat{%
       \includegraphics[width=.47\textwidth]{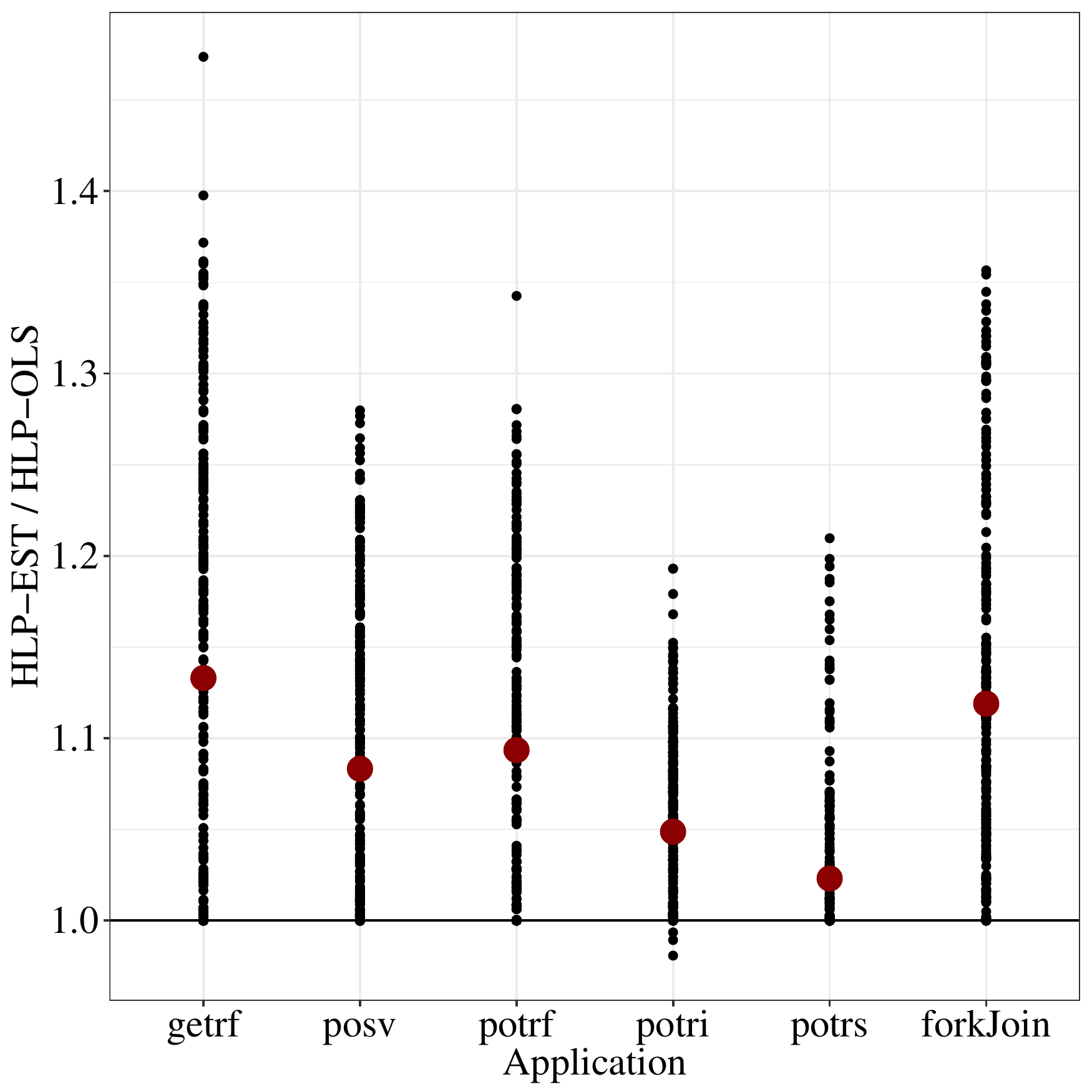}
     }
     \hfill
     \subfloat{%
       \includegraphics[width=.47\textwidth]{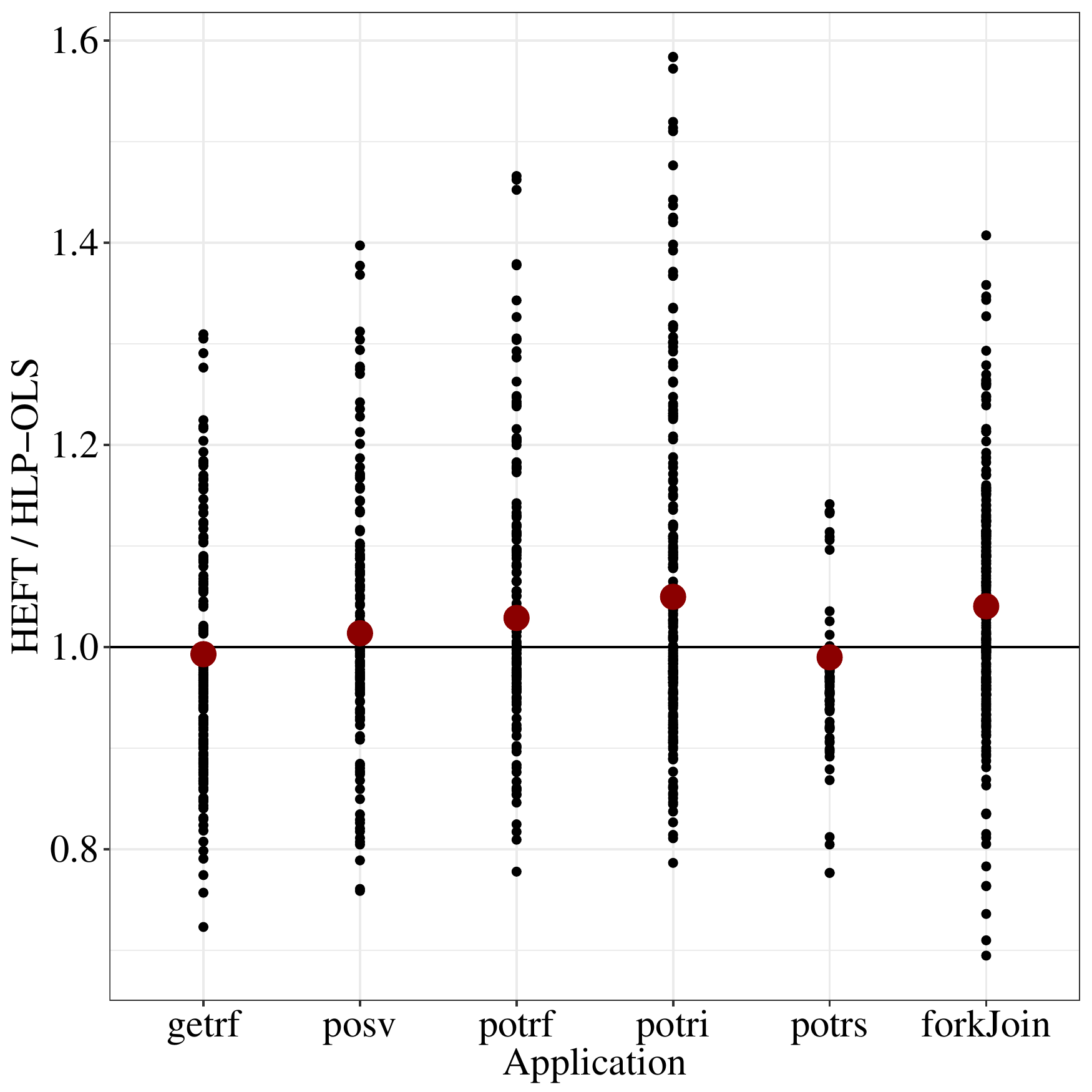}
     }
     \caption{Ratio between the makespans of HLP-EST and HLP-OLS (left), and HEFT and HLP-OLS (right) for each instance, grouped by application.}
     \label{fig:plotComp5}
\end{figure}

\noindent\textbf{Results for 3 resource types.}\\
To study the performance of the 3 algorithms we computed the ratio between each makespan and the optimal solution $LP^*$ of the linear program QHLP, which corresponds to a good lower bound of the optimal makespan.
Fig.~\ref{fig:plotR3} (left) shows the ratio of each instance of application and configuration. Notice that the red / bigger dot represents the mean value of the ratio for each application. We can see that QHLP-EST is on average outperformed by the two other algorithms. We also observe that even if QHEFT presents many outliers for the applications \emph{getrf}, \emph{posv} and \emph{potrf}, the algorithms outperforms on average QHLP-OLS.

\begin{figure}
\centering
\captionsetup{justification=centering}
\subfloat{
    \includegraphics[width=.47\textwidth]{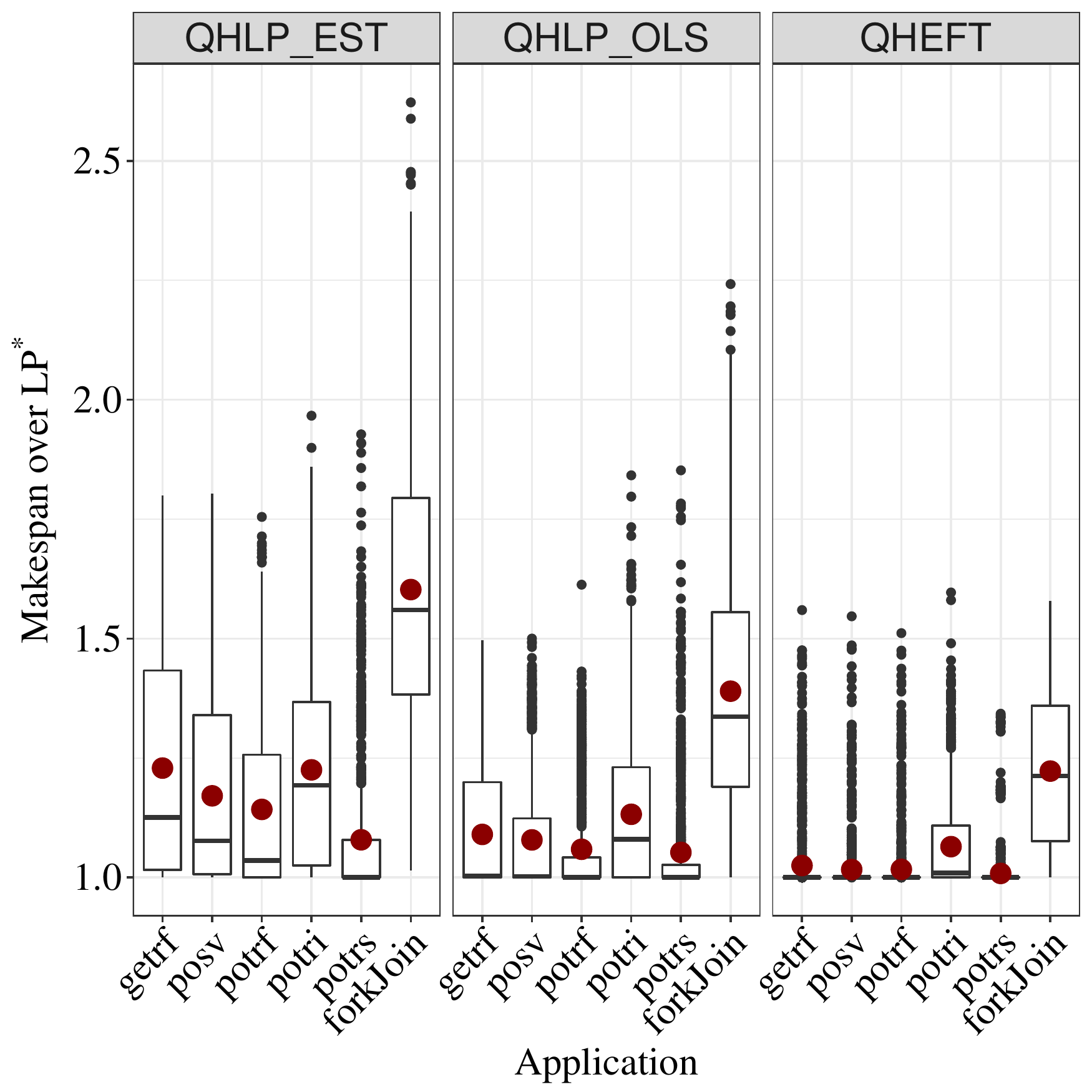}
}
\hfill
\subfloat{%
    \includegraphics[width=.47\textwidth]{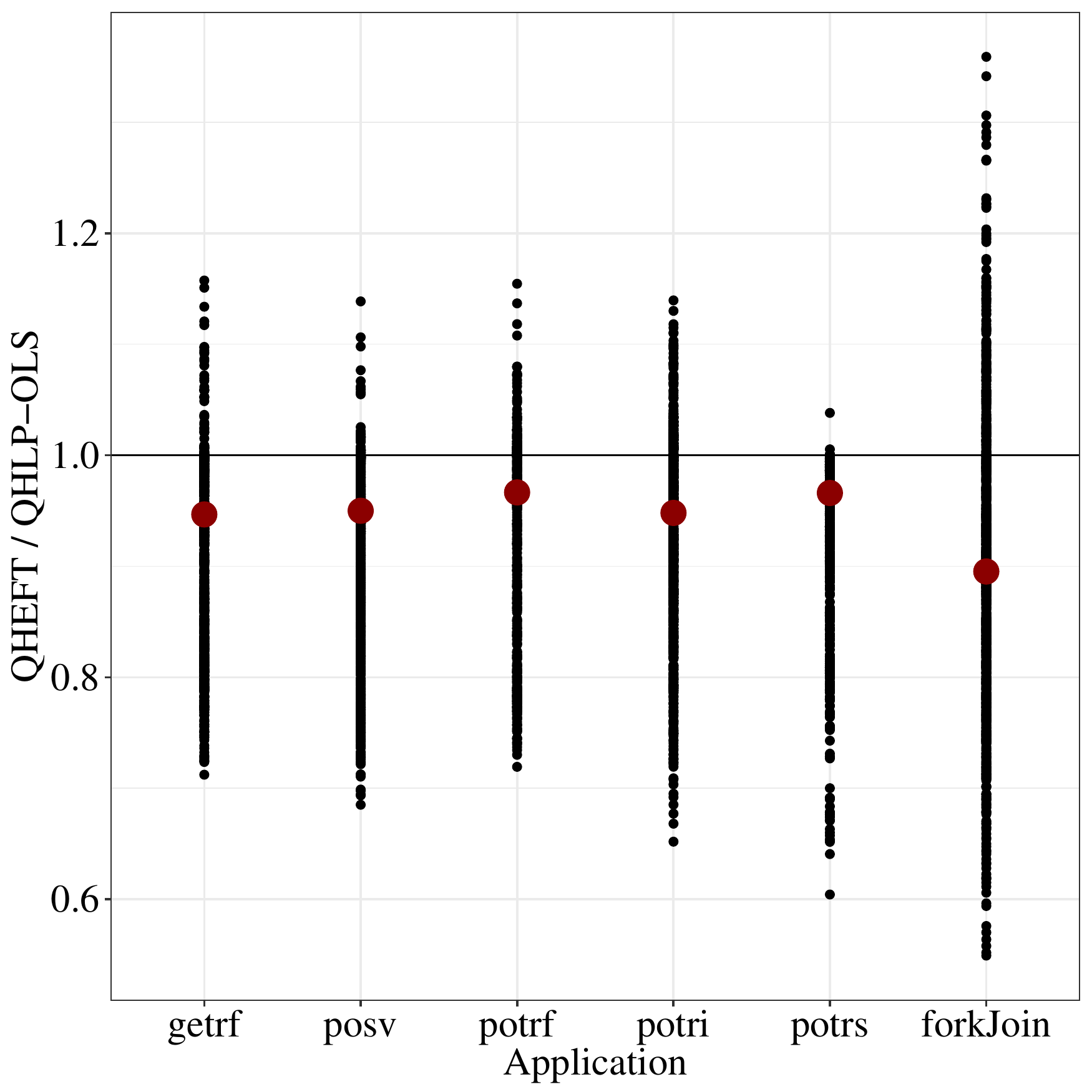}
}
\caption{Ratio of makespan over $LP^*$ for each instance (left), grouped by application for the algorithms generalized for 3 resource types. Ratio between the makespan of QHEFT and QHLP-OLS for each instance (right), grouped by application.}
\label{fig:plotR3}
\end{figure}

Fig.~\ref{fig:plotR3} (right) compares more specifically QHEFT and QHLP-OLS, by showing the ratio between the makespans of the two algorithms.
Comparing QHLP-OLS and QHEFT, we can see that QHEFT presents an improvement over QHLP-OLS of 5\% on average.
We also observe that the ratios for the most irregular application, \emph{fork-join}, are spread with instances favorable to QHEFT (up to 45\% of improvement) and other instances favorable to QHLP-OLS (up to 36\% of improvement).
\todoC{I don't know what conclusions can we take out of these observations}
Similar results were observed between QHLP-EST and QHLP-OLS as for 2 resource types and thus are not showed here.\\

Finally, we notice that the approximation ratios, computed with a lower bound of the optimal makespan, do not exceed 2 and thus are far from the theoretical bounds of the algorithms, even for the case with 3 resource types.

\subsection{On-line setting}

\noindent\textbf{Algorithms.}\\
We compared the performance, in terms of makespan, of the algorithm ER-LS (Section~\ref{ssec:onlineSetting}) with 3 baseline algorithms: \emph{EFT}, which schedules a task on the processor which give the earliest finish time for that task; \emph{Greedy}, which allocates a task on the processor type which has the smallest processing time for that task; and \emph{Random}, which randomly assigns a task to the CPU or GPU side. For the algorithms Greedy and Random, we used a List Scheduling algorithm to schedule the tasks once the allocations has been made.
The algorithms were implemented in Python (v. 2.8.6).
For the machine configurations, we used the same sets of pairs as for the off-line setting with 16, 32, 64 and 128 CPUs, and 2, 4, 8 and 16 GPUs for a total of 16 machine configurations.

We executed the algorithms only once with each combination of application and machine configuration since all algorithms are deterministic, except Random.
The running times of the algorithms were similar and took at most 5 seconds each for the biggest instances of applications.\\

\noindent\textbf{Results.}\\
Fig.~\ref{fig:plotON} (left) compares the ratios between the makespan of each of the on-line algorithms and $LP^*$. Due to large differences between the performances of Random and the 3 other algorithms, we kept only the algorithms ER-LS, EFT and Greedy.
Results show that Greedy is on average outperformed by ER-LS and EFT, and that EFT creates less outliers than the 2 other algorithms.
Fig.~\ref{fig:plotComp7} compares more specifically Greedy and ER-LS (left), and EFT and ER-LS (right), by showing the ratio between the makespans of the two algorithms.
We can see that ER-LS outperforms Greedy on average, with a maximum for the potri application where ER-LS performs 11 times better than Greedy. More specifically, there is an improvement of between 8\% and 36\% on average for ER-LS depending on the application considered, except for \emph{potrs} whose makespans are on average 10\% greater than for Greedy.

\begin{figure}
\centering
     \subfloat{%
       \includegraphics[width=.47\textwidth]{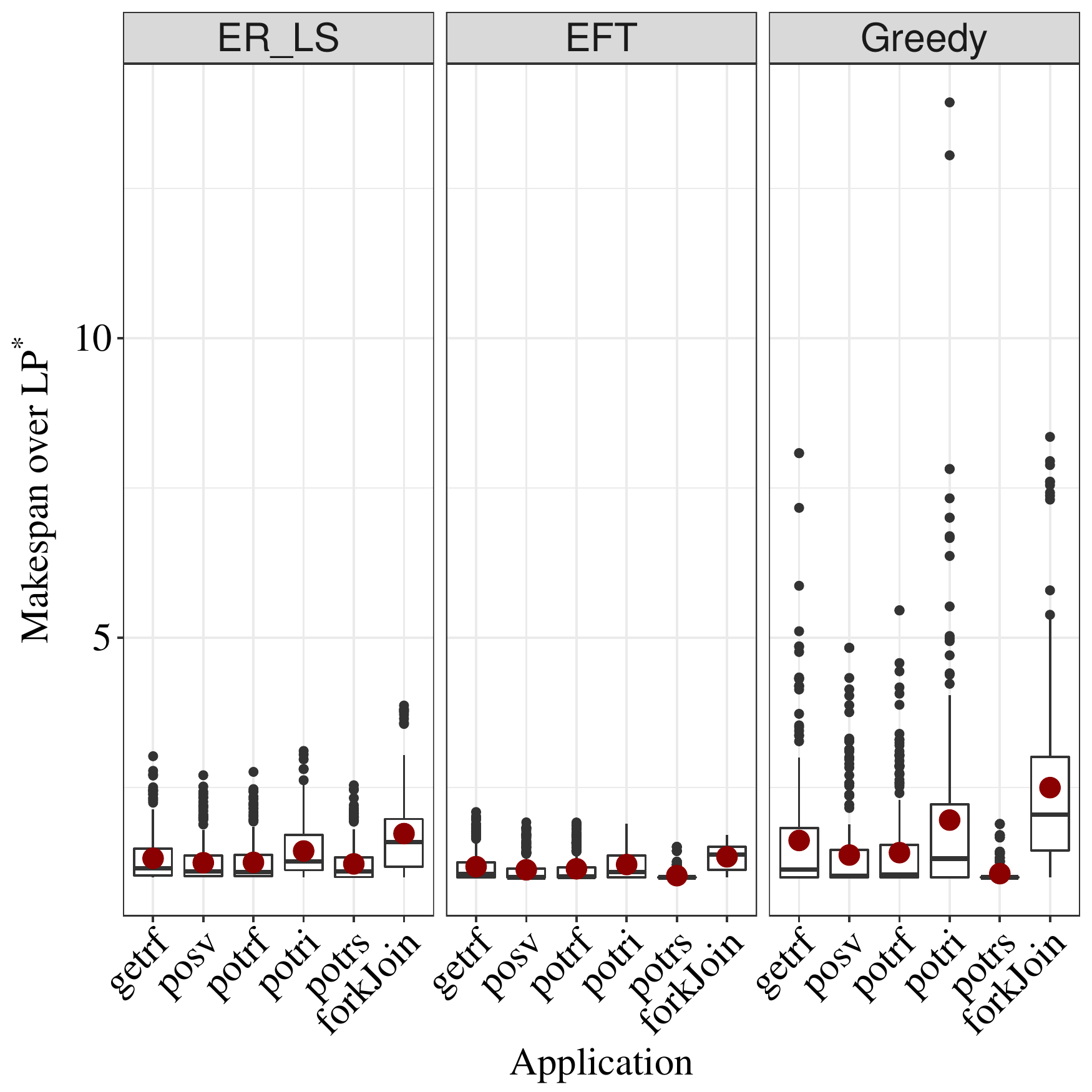}
     }
     \hfill
     \subfloat{%
       \includegraphics[width=0.47\textwidth]{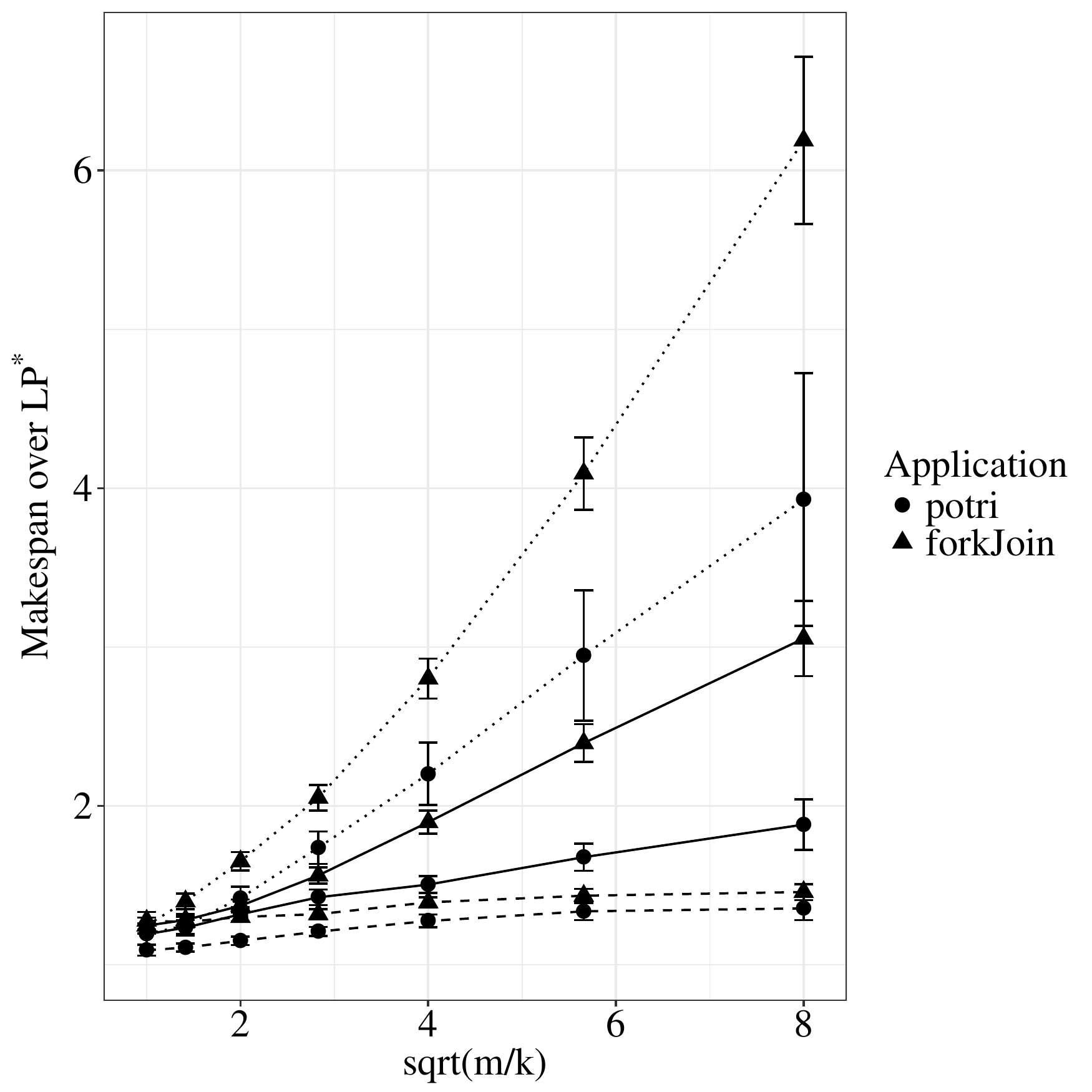}
     }
     \caption{Ratio of makespan over $LP^*$ for each instance, grouped by application for the on-line algorithms with 2 resource types (left). Mean competitive ratio, and standard error, of ER-LS (plain), EFT (dashed) and Greedy (dotted) as a function of $\sqrt{\frac{m}{k}}$ (right).}
     \label{fig:plotON}
\end{figure}

Comparing EFT and ER-LS, we can see that on average ER-LS is outperformed by EFT with a decrease of 11\% on average, and up to 60\% for certain instances of fork-join.
However, the worst-case competitive ratio for EFT can be directly obtained from the proof of the worst-case approximation ratio for HEFT, presented in Section~\ref{sec:LowerBounds}, by ordering the list of tasks of the instance by decreasing order of their rank. The competitive ratio of EFT is then at least $\frac{m+k}{k^2}\left(1-\frac{1}{e^k}\right)$, which is larger than the competitive ratio of ER-LS, namely $4\sqrt{\frac{m}{k}}$, when $k \leq \sqrt[3]{m}$.\\
\todoC{see if this argument is really needed}

\begin{figure}
\centering
     \subfloat{%
       \includegraphics[width=.45\textwidth]{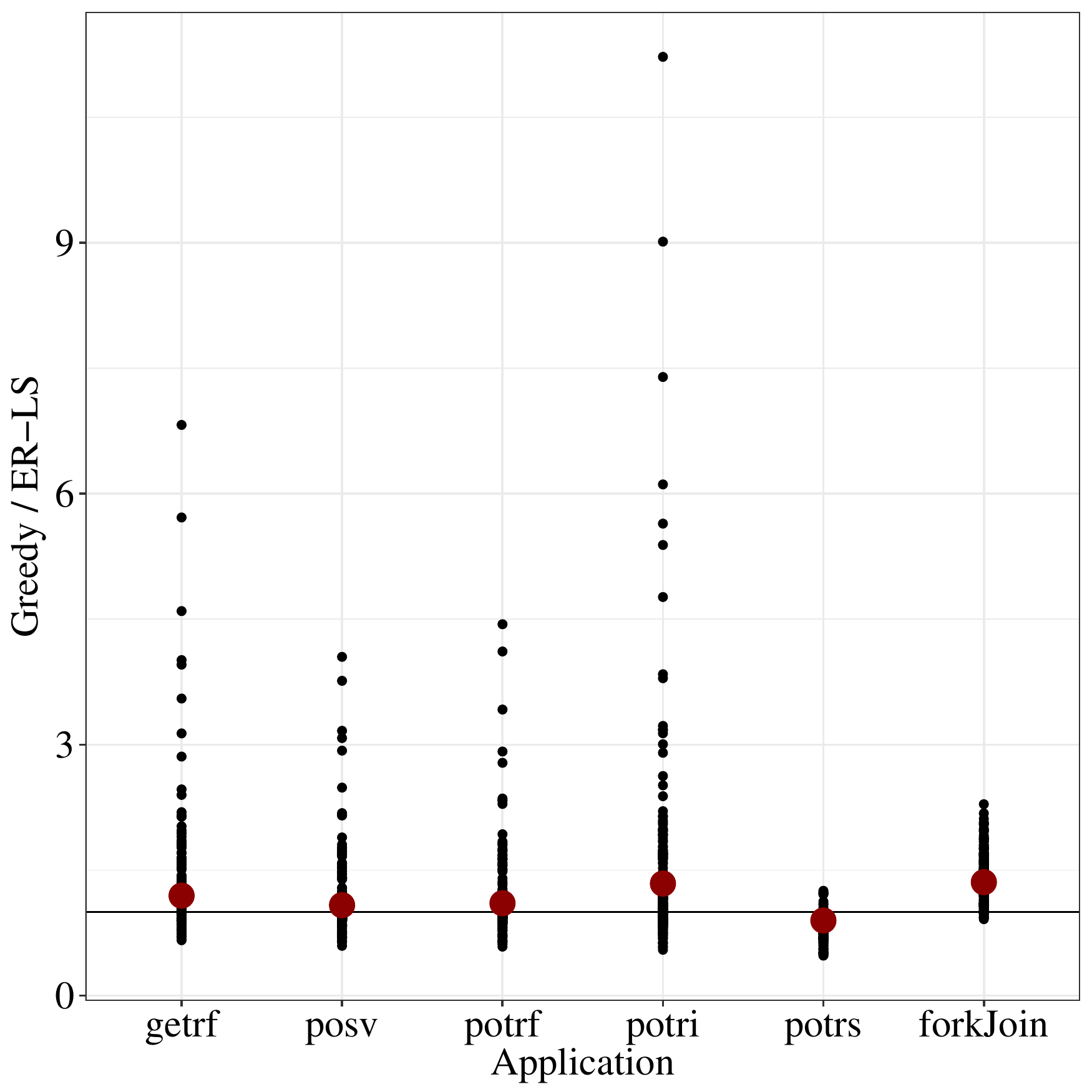}
     }
     \hfill
     \subfloat{%
       \includegraphics[width=0.45\textwidth]{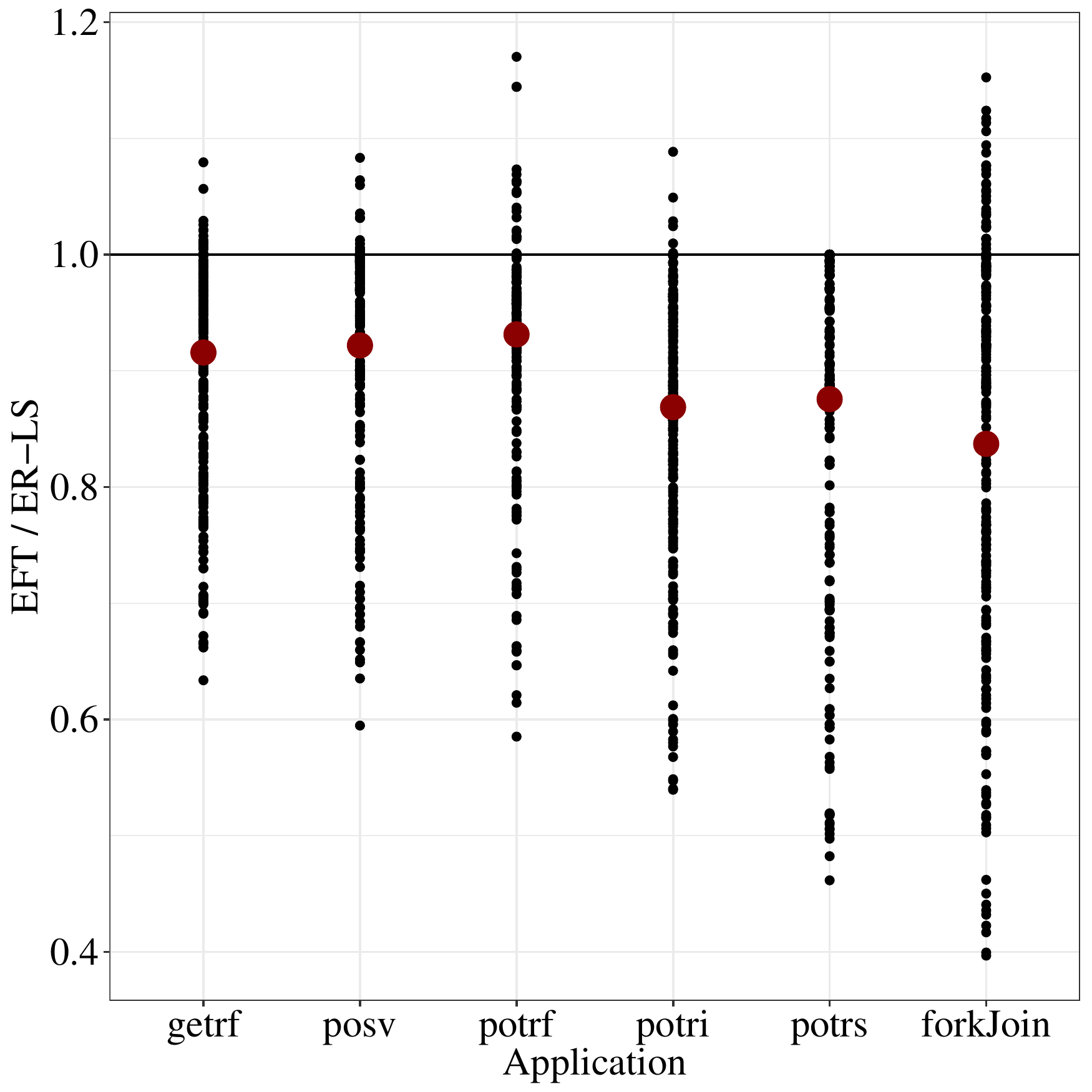}
     }
     \caption{Ratio between the makespans of Greedy and ER-LS (left), and EFT and ER-LS (right) for each instance, grouped by application.}
     \label{fig:plotComp7}
\end{figure}

We also study the performance of the 3 algorithms with respect to the theoretical upper bound given in Section~\ref{ssec:onlineSetting}.

Fig.~\ref{fig:plotON} (right) shows the mean competitive ratio of ER-LS, EFT and Greedy along with the standard error as a function of $\sqrt{\frac{m}{k}}$ associated to each instance.
To simplify the lecture, we only present the applications \emph{potri} and \emph{fork-join}, since other \emph{Chameleon} applications showed similar results. 
We observe that the competitive ratio is smaller than $\sqrt{\frac{m}{k}}$ and far from the theoretical upper bound of $4\sqrt{\frac{m}{k}}$ for ER-LS.

\vspace*{-0.13cm}

\section{Conclusions}\label{sec:Conclusions}
We studied the problem of scheduling parallel applications, represented by a precedence task graph, on heterogeneous multi-core machines.
We focused on generic approaches, non depending on the particular application, by distinguishing the allocation and the scheduling phases and proposed efficient algorithms with worst-case performance guarantees for both off-line and on-line settings.
In the off-line case, motivated by new lower bounds on the performance of existing algorithms, we refined the scheduling phase of the best known approximation algorithm and we presented a new algorithm that preserves the approximation ratio and performs better in our experiments.
We also extended this methodology for the more general case where the architecture is composed of $Q\geq 2$ types of resources. 
In the on-line case, we presented a $\Theta(\sqrt{\frac{m}{k}})$-competitive algorithm based on adequate rules. This is the first on-line result dealing with precedence constraints on hybrid machines.

From the practical point of view, an extensive simulation campaign on representative benchmarks constructed by real applications showed that it is possible to outperform the classical HEFT algorithm keeping reasonable running times for the case with 2 resource types. With more resource types, results showed that a simple generalization of an algorithm could present similar performances than HEFT.

For the on-line case, the algorithm based on rules is a good trade-off since it delivers a solution close to the optimal while keeping a running time similar to pure greedy algorithms.
We aim to implement it on a real run-time system (such as StarPU~\cite{Augonnet:StarPU}) which currently uses HEFT on successive sets of independent tasks. 

In this work we assumed that the communications between CPUs, GPUs and the shared memory are neglected.
Our next step is to introduce communication costs in the algorithms, which should not be too hard in both integer program and greedy rules.

\subsubsection*{Acknowledgments}\label{sec:acknowledgments}
This work was partially supported by FAPESP (S\~ao Paulo Research Foundation, grant \#2012/23300-7) and ANR Moebus Project.

All the experiments were performed on the Froggy platform of the CIMENT infrastructure (https://ciment.ujf-grenoble.fr), which is supported by the Rh\^{o}ne-Alpes region (GRANT CPER07$\_$13 CIRA) and the Equip@Meso project (reference ANR-10-EQPX-29-01) of the program ``Investissements d'Avenir'' supervised by the French Research Agency (ANR).

We also would like to thank Inria project team MORSE, especially Florent Pruvost and Mathieu Faverge, for the collaboration in this work.

\bibliographystyle{plain}

\end{document}